\newif\ifarxiv
\arxivtrue

\documentclass[a4paper,USenglish,cleveref, autoref, thm-restate]{lipics-v2021}
\ifarxiv{}
  \nolinenumbers
  \hideLIPIcs
\fi{}

\usepackage{mathtools}
\theoremstyle{definition}

\usepackage{xspace}
\usepackage[textwidth=3.5cm]{todonotes}
\usepackage[sort,numbers]{natbib}
\usepackage{sidecap}
\sidecaptionvpos{figure}{t}
\usepackage{tikz}
\tikzset{
	vertex/.style={circle,draw,fill=black, minimum size = 5pt, inner sep=0pt},
	edge/.style={color=black,ultra thick},
	diredge/.style={->,color=black, ultra thick},
}
\usepackage[algo2e,ruled,vlined,linesnumbered,hanginginout]{algorithm2e}
\SetEndCharOfAlgoLine{}
\let\oldnl\nl%
\newcommand{\nonl}{\renewcommand{\nl}{\let\nl\oldnl}}%

\usepackage{doi}
\bibliographystyle{plainnat}%

\title{Temporal Reachability Minimization:\\ Delaying vs.\ Deleting}
\titlerunning{Temporal Reachability Minimization: Delaying vs.\ Deleting} %

\newcommand{\tuaddress}{Technische Universität Berlin, Faculty IV, Algorithmics and Computational Complexity, Germany}

\author{Hendrik Molter}
{Department of Industrial Engineering and Management, Ben-Gurion University of the Negev, 
Israel}
{molterh@post.bgu.ac.il}
{https://orcid.org/0000-0002-4590-798X}{Supported by the German Research
Foundation (DFG), project MATE (NI 369/17), and by the Israeli Science Foundation (ISF), grant No.~1070/20.
 The main work was done while affiliated with TU Berlin.}

\author{Malte Renken}
{\tuaddress}
{m.renken@tu-berlin.de}
{http://orcid.org/0000-0002-1450-1901}{Supported by the German Research
Foundation (DFG), project MATE (NI 369/17).}

\author{Philipp Zschoche}
{\tuaddress}
{zschoche@tu-berlin.de}
{https://orcid.org/0000-0001-9846-0600}{}

\authorrunning{H.~Molter, M.~Renken, and P.~Zschoche} %

\Copyright{Hendrik Molter, Malte Renken, and Philipp Zschoche} %

\begin{CCSXML}
<ccs2012>
<concept>
<concept_id>10002950.10003624.10003633.10003644</concept_id>
<concept_desc>Mathematics of computing~Network flows</concept_desc>
<concept_significance>300</concept_significance>
</concept>
<concept>
<concept_id>10002950.10003624.10003633.10010917</concept_id>
<concept_desc>Mathematics of computing~Graph algorithms</concept_desc>
<concept_significance>500</concept_significance>
</concept>
<concept>
<concept_id>10002950.10003624.10003633.10003640</concept_id>
<concept_desc>Mathematics of computing~Paths and connectivity problems</concept_desc>
<concept_significance>500</concept_significance>
</concept>
</ccs2012>
\end{CCSXML}

\ccsdesc[500]{Mathematics of computing~Graph algorithms}
\ccsdesc[500]{Mathematics of computing~Paths and connectivity problems}
\ccsdesc[300]{Mathematics of computing~Network flows}

\keywords{Temporal Graphs, Temporal Paths, Disease Spreading, Network Flows,
Parameterized Algorithms, NP-hard Problems}

\category{} %

\relatedversion{} %

\funding{}%

\acknowledgements{This work was initiated at the research retreat of the
Algorithmics and Computational Complexity group of TU Berlin in September 2020
in Zinnowitz.}

\EventEditors{John Q. Open and Joan R. Access}
\EventNoEds{2}
\EventLongTitle{42nd Conference on Very Important Topics (CVIT 2016)}
\EventShortTitle{CVIT 2016}
\EventAcronym{CVIT}
\EventYear{2016}
\EventDate{December 24--27, 2016}
\EventLocation{Little Whinging, United Kingdom}
\EventLogo{}
\SeriesVolume{42}
\ArticleNo{23}

\theoremstyle{definition}

\newcommand{\problemdef}[3]{
	\begin{center}
	\begin{minipage}{0.96\textwidth}
		\noindent
		#1
		\vspace{4pt}\\
		\setlength{\tabcolsep}{3pt}
		\begin{tabularx}{\textwidth}{@{}lX@{}}
			\textbf{Input:}     & #2 \\
			\textbf{Question:}  & #3
		\end{tabularx}
	\end{minipage}
	\end{center}
}

\crefformat{equation}{(#2#1#3)}
\crefrangeformat{equation}{(#3#1#4) to~(#5#2#6)}
\crefmultiformat{equation}{(#2#1#3)}{ and~(#2#1#3)}{, (#2#1#3)}{ and~(#2#1#3)}

\DeclarePairedDelimiterX{\abs}[1]{\lvert}{\rvert}{#1}

\newcommand{\tn}{\textnormal}

\newcommand{\Wone}{\textsc{W[1]}\xspace}
\newcommand{\FPT}{\textsc{FPT}\xspace}
\newcommand{\NN}{\mathbb{N}}

\newcommand{\Tt}{\mathcal{T}}
\newcommand{\lifetime}{\ensuremath{\tau}}
\newcommand{\duration}{\gamma}
\newcommand{\reach}{R}
\newcommand{\TE}{\mathcal{E}}
\newcommand{\TG}{\mathcal{G}}
\newcommand{\TGcompact}{\ensuremath{\mathcal{G}=(V, (E_i)_{i\in[\lifetime]})}\xspace}
\newcommand{\TGtraversal}{\ensuremath{\mathcal{G}=(V, (E_i)_{i\in[\lifetime]}, \duration)}\xspace}
\newcommand{\FlowNet}{\mathcal{F}}

\newcommand{\NP}{\textrm{NP}}
\newcommand{\bigO}{{O}}

\newcommand{\ie}{i.\,e.,\ }
\newcommand{\yes}{\emph{yes}}
\newcommand{\no}{\emph{no}}
\newcommand{\TemporalDelayingLong}{\textsc{Minimizing Temporal Reachablity by Delaying}\xspace}
\newcommand{\TemporalDelaying}{\textsc{MinReachDelay}\xspace}

\newcommand{\TemporalDeletingLong}{\textsc{Minimizing Temporal Reachablity by Deleting}\xspace}
\newcommand{\TemporalDeleting}{\textsc{MinReachDelete}\xspace}

\newcommand{\TemporalSlowingLong}{\textsc{Minimizing Temporal Reachablity by Slowing}\xspace}
\newcommand{\TemporalSlowing}{\textsc{MinReachSlow}\xspace}

\newcommand{\true}{\texttt{true}}
\newcommand{\false}{\texttt{false}}

\newcommand{\mvert}{\;\middle\vert\;}

\newcommand{\delay}{\nearrow_\delta}
\newcommand{\slow}{\uparrow_\delta}

\begin{document}

\maketitle

\begin{abstract}
We study spreading processes in temporal graphs, \ie graphs whose connections change over time.
These processes naturally model real-world phenomena such as infectious diseases or information flows.
More precisely, we investigate how such a spreading process, emerging from a given set of sources, can be contained to a small part of the graph.
To this end we consider two ways of modifying the graph, which are (1) deleting connections and (2) delaying connections.
We show a close relationship between the two associated problems and give a polynomial time algorithm when the graph has tree structure.
For the general version, we consider parameterization by the number of vertices to which the spread is contained.
Surprisingly, we prove W[1]-hardness for the deletion variant but fixed-parameter tractability for the delaying variant.
\end{abstract}

\section{Introduction}
Reachability is a fundamental problem in graph theory and
algorithmics~\cite{reingold2008undirected,savitch1970relationships,hopcroft1973algorithm,dijkstra1959note}
and quite well-understood. With the emergence of temporal
graphs,\footnote{Temporal graphs are graphs whose edge set changes over discrete
time.} the concept of reachability was extended to the dimension of time using \emph{temporal
paths}~\cite{kempe2002connectivity,berman1996vulnerability}.
For a vertex~$s$ to reach another vertex~$z$ in a temporal graph,
there must not only be a path between them but the edges of this path have to appear in chronological order.
This requirement makes temporal reachability non-symmetric and non-transitive,
which stands in contrast to reachability in normal (static) graphs.
Reachability is arguably one of the most central concepts in temporal graph
algorithmics and has been studied under various aspects, such as path
finding~\cite{wu_efficient_2016,himmel_efficient_2019,xuan_computing_2003,CasteigtsHMZ20},
vertex separation~\cite{kempe2002connectivity,FluschnikMNRZ20,ZschocheFMN20},
finding spanning subgraphs~\cite{CasteigtsPS19,AxiotisF16}, temporal graph
exploration~\cite{ErlebachS18,Erlebach0K15,ErlebachS20,BodlaenderZ19,ErlebachKLSS19,AMS19},
and others~\cite{mertzios2019temporal,AkridaMNRSZ19,BMNR20,HMNR20}.

Perhaps the most prominent application of temporal graph reachability is currently epidemiology,
dealing with effective prevention or containment of disease spreading \cite{covid}.
Here, minimizing the reachability of vertices in a temporal graph by
manipulating the temporal graph corresponds to minimizing the spread of
an infection in various networks by some countermeasures.
Application instances for this scenario may be
drawn from physical contacts \cite{ferretti2020quantifying,eames2003contact} or
airline flights \cite{brockmann2013hidden,colizza2006role}, but also 
social networks \cite{goffman1964generalization,daley1964epidemics}, 
cattle movements \cite{mitchell2005characteristics}, or
computer networks \cite{pastor2001epidemic-big}.

\citet{EnrightMMZ19%
} studied the problem of deleting $k$
time-edges%
\footnote{That is, an edge at a point in time.}
such that no single vertex can reach more than $r$ other vertices and showed
its \NP-hardness and \Wone-hardness for the parameter $k$, even in very restricted
settings.
Here, we shift the focus to a set of multiple given sources, thus studying the
following problem, which has not been considered for computational complexity
analysis yet (to the best of our knowledge).
\problemdef{\textsc{\TemporalDeletingLong (\TemporalDeleting)}}{A temporal graph~$\TG$, a set of
sources vertices $S$, and integers $k, r$.}
{Can we delete at most $k$ time-edges s.t.\ at most $r$ vertices
are reachable from~$S$?}

Imaginably, removing  edges or vertices is not the most \emph{infrastructure friendly} approach  to restrict reachability.
To address this, other operations have been studied.
\citet{enright2021assigning} considered restricting the reachability by 
just changing the relative order in which edges are active.
\citet{DeligkasP20} considered restricting the reachability by a merging operation of consecutive edge sets of the temporal graph
and by a delay operation of time-edges by $\delta$ time steps, i.e., moving a time-edge from time $t$ to $t+\delta$.
In particular, they introduced a delay variant of \TemporalDeleting. 
\problemdef{\textsc{\TemporalDelayingLong (\TemporalDelaying)}}
{A temporal graph~$\TG$, a set of sources vertices $S\subseteq V$, and integers $k,r, \delta$.}
{Can we delay at most $k$ time-edges by $\delta$ s.t.\ at most $r$ vertices are
reachable from~$S$?} This is the central problem studied in this paper.
Throughout the whole paper we assume for all instances of \TemporalDeleting{} and \TemporalDelaying{} that $0 < |S| \leq r$.
We remark that technically \citet{DeligkasP20} formulate the problem slightly differently, allowing delays of \emph{up to}~$\delta$ to appear.
However, a simple argument can be given to see that this distinction is not significant:
Clearly, delaying a time-edge reduces the number of reachable vertices only if the undelayed time-edge could be reached from some source~$s \in S$.
But when this is the case, increasing the delay of that time-edge can never increase the set of vertices reachable from~$S$.
The reason is that, while increasing the delay might enable some source~$s' \in S \setminus \{s\}$ to reach a vertex~$v$,
that vertex~$v$ would be reached from~$s$ in any case.

\citet{DeligkasP20} showed that \TemporalDelaying{} is \NP-hard and \Wone-hard when parameterized by $k$, 
even if underlying graph has lifetime $\lifetime=2$.
A close look into the proof reveals that this also holds for \TemporalDeleting{}.

\subparagraph{Our contribution.}
We study how \TemporalDeleting{} and \TemporalDelaying{} relate to each other.
We show that both problems are polynomial-time solvable on trees.
Moreover, there is an intermediate reduction from \TemporalDeleting{} to \TemporalDelaying{} 
indicating that \TemporalDelaying{} seems generally harder than \TemporalDeleting{}.
However, surprisingly, this is no longer true when we parameterize the problems by
the number $r$ of reachable vertices.
Here, we develop a max-flow-based branching strategy and obtain fixed-parameter tractability for \TemporalDelaying{} while \TemporalDeleting{} remains \Wone-hard.
This makes \TemporalDelaying{} particularly interesting for applications where the number of reachable vertices should be very small,
e.g.\ when trying to contain the spread of dangerous diseases.

\section{Preliminaries}
We define $\NN$ as the positive natural numbers, $[a,b] := \{i \in \mathbb Z \mid a\leq i \leq b\}$, and $[n]:=[1,n]$.
For a function $f \colon V \rightarrow \mathbb Z$ and subset $X \subseteq V$ we denote by $f(X)$ the sum $\sum_{x\in V} f(x)$.
We use standard notation from graph theory \cite{Die16}.
We say for a (directed) graph $G$ that $G - X := G[V(G)\setminus X]$ is the induced subgraph of $G$ 
when the vertices in $X$ are removed,
and $G \setminus Y := (V(G),E(G)\setminus Y)$ is the subgraph when the edges in $Y$ are removed, 
where $X$ is a vertex set and $Y$ is an edge set.
For any predicate~$P$, the Iverson bracket~$[P]$ is 1 if $P$ is true and 0 otherwise.

\subparagraph{Parameterized complexity.}
Let~$\Sigma$ denote a
finite alphabet.
A parameterized problem~$L\subseteq \{(x,k)\in \Sigma^*\times \mathbb N \cup \{0\} \}$ is a subset of all instances~$(x,k)$ from~$\Sigma^*\times \mathbb N \cup \{0\}$,
where~$k$ denotes the \emph{parameter}.
A parameterized problem~$L$ is in 
\FPT (is \emph{fixed-parameter tractable}) if there is an algorithm that decides
every instance~$(x,k)$ for~$L$ in~$f(k)\cdot |x|^{O(1)}$ time,
where~$f$ is any computable function only depending on the parameter.
If a parameterized problem $L$ is W[1]-hard, then it is presumably not
fixed-parameter tractable.
We refer to \citet{DF13} for details.

\subparagraph{Temporal graphs.}
A \emph{temporal graph} $\TG$ consists of a set of vertices~$V$ (or $V(\TG)$),
and a sequence of edge sets~$(E_i)_{i \in [\tau]}$
where each $E_i$ is a set of unordered pairs from $V$.
The number~$\tau$ is called the \emph{lifetime} of $\TG$.
The elements of $\TE(\TG) := \bigcup_{i \in [\tau]} E_i \times \{i\}$ are called the \emph{time-edges} of~$\TG$.
Furthermore $\TG$ has a \emph{traversal time} function $\duration: \TE(\TG) \to \NN$ specifying the time it takes to traverse each time-edge.
The temporal graph $\TG$ is then written as the tuple~$(V, (E_i)_{i \in [\tau]}, \duration)$.
Often we assume $\duration$ to be the constant function $\duration \equiv 1$
and then simply write $\TGcompact$.
The \emph{underlying graph} of $\TG$ is the graph $(V,\bigcup_{i=1}^\lifetime E_i)$.
For a time-edge set $Y$ and temporal graph $\TG$, 
we denote by $\TG \setminus Y$ the temporal graph 
where~$V(\TG\setminus Y)= V(\TG)$ 
and~$\TE(\TG\setminus Y) = \TE(\TG) \setminus Y$.
A \emph{temporal $s$-$z$-path} in~$\TG$ 
is a sequence of time-edges $P=(e_i = (\{v_{i-1},v_i\},t_i))_{i=1}^m$ where
\begin{enumerate}
\item $v_0=s$ and $v_m=z$,
\item the sequence of edges $(\{v_{i-1},v_i\})_{i=1}^m$ forms an $s$-$z$-path in the underlying graph of $\TG$, and 
\item $t_{i+1} \geq t_{i} + \duration(e_i)$ for all $i \in [m-1]$.
\end{enumerate}
The \emph{arrival time} of $P$ is $t_m+\duration(e_m)$.
The set of vertices of $P$ is denoted by $V(P) = \{ v_i \mid 0 \leq i \leq m\}$.
A vertex $w$ is \emph{reachable} from $v$ in $\TG$ (at time~$t$) if there exists a temporal $v$-$w$-path in $\TG$ (with arrival time at most~$t$).
In particular, every vertex reaches itself via a trivial path.
Furthermore, $w$ is reachable from $S \subseteq V$ if there is a temporal $s$-$w$-path for some $s \in S$,
and the set of all vertices reachable from~$S$ is denoted the \emph{reachable set} $\reach_\TG(S)$.
We drop the index $\TG$ if it is clear form the context.
\emph{Delaying} a time-edge $(\{v,w\}, t)$ by $\delta$ refers to replacing it with the time-edge $(\{v, w\}, t+\delta)$.
For a temporal graph $\TG$ and a time-edge set $X \subseteq \TE(\TG)$ we denote by $\TG \delay X$ the temporal graph $\TG$ where the time-edges in $X$ are delayed by $\delta$.

\subparagraph{Preliminary observations.}
We present an intermediate polynomial-time reduction from the \TemporalDeleting{} to \TemporalDelaying{}.
\begin{lemma}
		\label{lem:delete-to-delay}
		Given an instance $I=(\TGcompact,S,k,r)$ of \TemporalDeleting{}, we can compute in linear time,
		an instance $J=(\TG'=(V',(E_i')_{i=1}^{6\lifetime+1}),S',k,r',\delta=3\lifetime)$ of \TemporalDelaying{} such that
		the feedback vertex number\footnote{That is, 
		the minimum number of vertices needed to hit all cycles in an undirected graph.}
				of the underlying graph of~$\TG$ and~$\TG'$ is the same, and
		$I$ is a \yes-instance if and only if $J$ is a \yes-instance.
\end{lemma}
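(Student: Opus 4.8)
The plan is to simulate a deletion by delaying a single ``bridge'' time-edge inside a small gadget, choosing $\delta=3\lifetime$ so large that a delayed bridge can never be reinserted into any temporal path. For the construction I keep the sources in $S$ and subdivide each underlying edge $\{u,v\}$ twice, inserting vertices $a_{uv},b_{uv}$ so that $u,a_{uv},b_{uv},v$ form a path. For every $i$ with $\{u,v\}\in E_i$ I set $c:=3i$ and install the four into-/out-edges $(\{u,a_{uv}\},c{-}1),(\{u,a_{uv}\},c{+}1),(\{b_{uv},v\},c{-}1),(\{b_{uv},v\},c{+}1)$ together with the \emph{bridge} $(\{a_{uv},b_{uv}\},c)$, which I declare to be the image of $(\{u,v\},i)$. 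A direct check shows that both $u\to v$ and $v\to u$ can be routed through the gadget, that each route must use the bridge, that all undelayed gadget edges lie in $[1,3\lifetime{+}1]$, and that a delayed bridge lands in $[3\lifetime{+}3,6\lifetime]$. Finally I hang on every midpoint a pendant source whose only time-edge sits at the last step $6\lifetime{+}1$; this makes every midpoint reachable at the final step without ever leaking into the real vertices. I set $S'=S\cup\{\text{pendant sources}\}$, $\delta=3\lifetime$, and $r'=r+4\lvert E(\UG)\rvert$.

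The underlying graphs of $\TG$ and $\TG'$ differ only by subdividing edges, which preserves the feedback vertex number, and by attaching degree-one pendants, which lie on no cycle; hence the two feedback vertex numbers coincide. As each time-edge spawns only a constant-size gadget, $J$ is computable in linear time.

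For the forward direction I translate a deletion set $F$ into the set $X$ of the corresponding bridges, so $\lvert X\rvert=\lvert F\rvert\le k$. The timing is tight but consistent: a gadget with centre $3i$ converts ``$u$ reached by time $3i{-}1$'' into ``$v$ reached by time $3i{+}2$'', matching exactly the entry deadline $3(i{+}1){-}1$ of every later gadget, so temporal $S$-paths in $\TG\setminus F$ and gadget paths in $\TG'\delay X$ correspond and $\reach_{\TG'\delay X}(S')\cap V=\reach_{\TG\setminus F}(S)$. Since a delayed bridge has no gadget-exit edge after time $3\lifetime{+}1$ it is a temporal dead end and reaches no real vertex, while all midpoints and pendant sources are always reachable; hence $\lvert\reach_{\TG'\delay X}(S')\rvert=\lvert\reach_{\TG\setminus F}(S)\rvert+4\lvert E(\UG)\rvert\le r'$.

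For the backward direction I start from an arbitrary delay solution $X$ and let $F$ collect the original time-edges whose gadget $X$ breaks. The main obstacle is normalization: a solution may waste delays on into-, out-, or pendant edges instead of bridges, so I must show (i) breaking a gadget costs $X$ at least one delay chargeable injectively, giving $\lvert F\rvert\le\lvert X\rvert\le k$, and (ii) every real vertex still reachable in $\TG\setminus F$ remains reachable in $\TG'\delay X$, giving $\lvert\reach_{\TG\setminus F}(S)\rvert\le r'-4\lvert E(\UG)\rvert=r$. I expect to invoke the monotonicity observation from above (increasing the delay of an already-reachable time-edge never enlarges the reachable set) to move every delay onto the bridge of its own gadget at no extra cost, after which the bijection between bridges and original time-edges closes the equivalence. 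The most delicate point is ruling out unintended temporal paths that exploit the $c\pm1$ edge copies or chain several delayed bridges in the second half to reach an extra real vertex; this is precisely what the oversized delay $\delta=3\lifetime$ and the separation of $[1,3\lifetime{+}1]$ from $[3\lifetime{+}3,6\lifetime]$ are engineered to prevent.
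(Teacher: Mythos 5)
Your construction is essentially identical to the paper's own proof: the paper likewise replaces each underlying edge $\{u,v\}$ by a length-three path with a central ``bridge'' time-edge (at time $3t-1$ rather than your $3t$, a purely cosmetic shift), uses entry/exit copies one time step before and after the bridge, attaches pendant sources at time $6\lifetime+1$ to neutralize the $4\abs{E(\UG)}$ new vertices, and in the backward direction normalizes an arbitrary delay set onto bridges via the same ``delaying the bridge is at least as good'' monotonicity argument. The proposal is correct and takes the same approach, at roughly the same level of rigor as the paper's proof.
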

\newcommand{\ev}[1]{\tn{e}_{#1}}
\newcommand{\sv}[1]{\tn{s}_{#1}}
\newcommand{\Ve}{V_\tn{e}}
\newcommand{\Vs}{V_\tn{s}}
\begin{proof}
		We construct $\TG' = (V',(E_i')_{i=1}^{3\lifetime+\delta+1})$ in the following way.
		Set
		\begin{align*}
				\Ve{} &:= \{ \ev{uv}, \ev{vu} \mid (\{ v,u\}, t) \in \TE(\TG) \}\}, &
				\Vs{} &:= \{ \sv{vu} \mid \ev{vu} \in \Ve{}\}, \\
				V' 	&:= V \cup \Ve{} \cup \Vs{}, \text{ and} &
				S' 	&:= S \cup \Vs{} \,.
		\end{align*}
		Begin with $E_i'=\emptyset$ for all $i \in [3\lifetime]$.
		Then, add for each time-edge $(\{v,u\},t) \in \TE(\TG)$, 
		the time-edges $(\{v, \ev{vu}\},3t-2),(\{v,\ev{vu}\},3t),
		(\{\ev{vu}, \ev{uv}\},3t-1),
		(\{u, \ev{uv}\},3t-2)$, and $(\{u, \ev{uv}\},3t)$ to $\TG'$.
		Afterwards, add the time-edge $(\{\sv{vu}, \ev{vu}\},6\lifetime+1=3\lifetime+\delta+1)$ for each $ \ev{vu} \in \Ve{}$.
		Finally we set $r' := r + |V' \setminus V|$.
		Since we add for each time-edge of $\TG$ a constant number of vertices and time-edges to $\TG'$,
		we have that $|\TG'| \in O(|\TG|)$ and $\TG'$ can be computed in linear time.
		Moreover, the underlying graph~$G'$ of $\TG'$ is obtained from the underlying graph~$G$ of $\TG$ by subdividing edges and adding leaves,
		thus $G$ and $G'$ have the same feedback vertex number.
		It remains to prove that the two instances are equivalent.
		
		\textbf{($\Rightarrow$):} Let $X$ be a solution for $I$.
		Then, set $X' := \{ (\{\ev{vu}, \ev{uv}\},3t-1) \mid (\{v,u\},t) \in X \}$.
		Pick $s \in S$ and $v \in V$ arbitrary.
		Note that for each temporal $s$-$v$-path $P'$ in $\TG'$,
		we can construct a temporal $s$-$v$-path $P$ in $\TG$
		which uses a time-edge $(\{u,w\},t)$ if and only if $P'$ uses the time-edge $(\{\ev{uw}, \ev{wu}\}, 3t-1)$.
		Furthermore, any temporal $s$-$v$-path in $\TG'\delay X'$ cannot use any delayed time-edge.
		As $s$~and~$v$ are arbitrary, this proves $\reach_{\TG' \delay X'}(S') \cap V \subseteq \reach_{\TG \setminus X}(S)$.
		Thus, at most $r + \abs{V' \setminus V} = r'$ vertices are reachable from $S'$ in $\TG' \delay X'$,
		thus $J$ is a \yes-instance.

		\textbf{($\Leftarrow$):} Let $X'$ be a solution for $J$.
		Begin by observing that delaying any time-edges between $\Vs{}$ and $\Ve{}$ has no effect.
		Consequently, $\reach_{\TG' \delay X'}(\Vs{}) = \Vs{} \cup \Ve{}$ for every possible choice of $X'$.
		Therefore $X'$ is a valid solution if and only if $\abs{\reach_{\TG' \delay X'}(S) \cap V} \leq r' - \abs{\Vs{} \cup \Ve{}} = r$,
		i.e., we only need to study the reachability between vertices in~$V$.
		Because of this, delaying a time-edge connecting two vertices of~$\Ve{}$ has the same effect as deleting that time-edge.
		Next, observe that instead of delaying some time-edge $(\{v, \ev{vu}\}, t)$ which connects vertices of $V$ and $\Ve{}$,
		the same or better reduction of reachability is achieved by instead delaying $(\{\ev{vu}, \ev{uv}\}, t')$,
		with $t' \in \{t - 1, t+1\}$ chosen appropriately.
		Due to this, we may assume without loss of generality that 
		$X' \subseteq \{ (\{\ev{vu}, \ev{uv}\},3t-1) \mid (\{v,u\}, t) \in \TE(\TG)\}$.

		Set $X := \{ (\{v,u\},t) \mid (\{\ev{uv},\ev{vu}\},3t-1) \in X' \}$.
		Let $s \in S$ and $v \in V$ be arbitrary.
		Note that for each temporal $s$-$v$-path $P$ in $\TG$,
		we can construct a temporal $s$-$v$-path $P'$ in $\TG'$ as above.
		Thus $\reach_{\TG \setminus X}(S) \subseteq \reach_{\TG' \delay X'}(S) \cap V$.
		Since we already observed that $\abs{\reach_{\TG' \delay X'}(S') \cap V} \leq  r$,
		we conclude that $I$~is a \yes-instance.
\end{proof}
Note that the reduction in \cref{lem:delete-to-delay} preserves the size~$k$ of the solution and the feedback vertex number of the underlying graph.
We remark that, in exchange for dropping the latter property, one can modify the reduction to instead have $\abs{S'} = \abs{S}$, by simply adding time-edges from $S$ to $\Vs{}$ before all other time-edges.
However, in any case the size $r'$ of the reachable set in $J$ is unbounded in terms of the size $r$ of the reachable set in $I$. 
Unless $\FPT=\Wone$, this is unavoidable as we learn in the next section.

\section{Parameterized by the Reachable Set Size}
In this section the study \TemporalDelaying{} and \TemporalDeleting{} parameterized by the reachable set size $r$.
In particular, our main result in this section is the fixed-parameter tractability of \TemporalDelaying{} parameterized by $r$.
This is in stark contrast to the \Wone-hardness of \TemporalDeleting{} parameterized by $r$ which we show first.
\begin{theorem}
		\label{thm:del-r-whard}
		\TemporalDeleting{} parameterized by $r$ is \Wone-hard, even if $\lifetime=2$.
\end{theorem}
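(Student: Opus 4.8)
The plan is to reduce from \textsc{Multicolored Clique}, which is \Wone-hard parameterized by the number~$q$ of colors: given a graph~$H$ whose vertex set is partitioned into color classes $V_1,\dots,V_q$, decide whether $H$ has a clique containing exactly one vertex per class. I would build a temporal graph of lifetime~$2$ that encodes a \emph{choice of one edge per color pair together with a consistency check}. Introduce one source $s_{ij}$ for every pair $\{i,j\}$ of colors, so that $S=\{s_{ij}\}$ and $\abs{S}=\binom q2$. For every edge $e=\{v,w\}$ of~$H$ with $v\in V_i$, $w\in V_j$ add an \emph{edge-token}~$a_e$ and the time-edge $(\{s_{ij},a_e\},1)$; thus $s_{ij}$ reaches at time~$1$ exactly the tokens of the edges between classes $i$ and $j$. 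For every color~$i$ and vertex $v\in V_i$ add a \emph{consistency vertex}~$c^v_i$, and for each edge-token $a_e$ with $e=\{v,w\}$ as above add the time-edges $(\{a_e,c^v_i\},2)$ and $(\{a_e,c^w_j\},2)$. Finally set $k=m-\binom q2$ (where $m=\abs{E(H)}$) and $r=2\binom q2+q$, which depends only on~$q$, as required for a reduction for the parameter~$r$.

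The intended semantics is that a budget-$k$ deletion must \emph{select} one surviving edge-token per color pair, and the reachable set then consists of the $\binom q2$ sources, the $\binom q2$ selected tokens, and one consistency vertex $c^v_i$ for each \emph{distinct} endpoint~$v$ that the selected edges use in class~$i$. For the forward direction I would start from a multicolored clique $\{v_1,\dots,v_q\}$ and, for each pair $\{i,j\}$, keep the token of the edge $\{v_i,v_j\}\in E(H)$ while deleting the time-$1$ edges of all other $m-\binom q2$ tokens. Every surviving token of a pair involving color~$i$ uses the same endpoint~$v_i$, so exactly the $q$ consistency vertices $c^{v_i}_i$ are reached, and the reachable set has size $\binom q2+\binom q2+q=r$, witnessing a \yes-instance of \TemporalDeleting{}.

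For the backward direction I would argue that any deletion set of size at most~$k$ achieving reachability at most~$r$ must originate from such a clique. Since killing a token costs one deletion and at most $m-\binom q2=k$ tokens can be removed, at least $\binom q2$ tokens survive and are reachable; together with the $\binom q2$ sources this already pins down how few consistency vertices may be reached. The key quantity is $\sum_i d_i$, where $d_i$ is the number of distinct class-$i$ endpoints used by surviving tokens. A short accounting — tracking how many time-$2$ edges can still be deleted with the remaining budget — should show that reachability at most~$r$ forces $\sum_i d_i\le q$. I would then combine this with the fact that at least $\binom q2$ tokens survive to conclude that exactly one token per color pair survives and that all tokens incident to a color~$i$ agree on their class-$i$ endpoint~$v_i$; the surviving edges are then precisely $\{v_i,v_j\}\in E(H)$ for all pairs, so $\{v_1,\dots,v_q\}$ is a clique.

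I expect the backward direction to be the main obstacle, for two reasons. Because temporal reachability is monotone — a consistency vertex is reached as soon as \emph{some} surviving token points to it — one cannot detect an inconsistency directly, so the argument must be a global \emph{counting} bound forcing $\sum_i d_i\le q$ rather than a local gadget check. The delicate part is ruling out the ``cheating'' strategies the budget might otherwise permit: keeping more than $\binom q2$ tokens and spending the freed deletions on suppressing consistency vertices via time-$2$ edges, or deleting \emph{all} tokens incident to some color so that it contributes no consistency vertex. I would handle these by an exchange argument observing that two surviving tokens of the same pair are distinct edges and hence differ at one endpoint, so any surplus token forces an extra distinct endpoint on the opposite color; emptying a color or keeping surplus tokens therefore strictly increases $\sum_i d_i$ and cannot beat the clean one-token-per-pair solution. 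Verifying that no temporal path inflates the reachable set beyond sources, surviving tokens, and their consistency vertices — in particular that consistency vertices are temporal sinks appearing only at time~$2$ — is routine but must be recorded.
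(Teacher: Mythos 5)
This is correct and essentially the paper's own approach: the paper reduces from plain \textsc{Clique} (parameter $\ell$) using a single source $s$ joined at time~$1$ to an edge-vertex $e_f$ for each edge $f$, with $e_f$ joined at time~$2$ to the two endpoints of $f$, budget $k = m - \binom{\ell}{2}$ and $r = 1 + \ell + \binom{\ell}{2}$, and its backward direction uses exactly the two ingredients you identify, namely an exchange argument replacing any time-$2$ deletion by the corresponding time-$1$ deletion, and the counting that at least $\binom{\ell}{2}$ surviving edge-vertices whose endpoints number at most $\ell$ force a clique. Your multicolored, multi-source variant is a correct but heavier rendering of the same idea (your accounting does close: $T \ge \binom{q}{2}$ surviving tokens, $T + D \le \binom{q}{2} + q$, and $T \le \binom{D}{2}$ force $T = \binom{q}{2}$ and $D = q$, hence a clique with one vertex per class); the per-pair sources and consistency vertices buy nothing that the single-source construction does not already achieve.
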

\begin{proof}
	We present a parameterized reduction from the \Wone-hard \cite{DF13} \textsc{Clique} problem parameterized by $\ell$,
	where given a graph $H=(U, F)$ 
	we are asked whether $H$ contains a clique of size $\ell$. 

	Let $H=(U, F)$ be a graph, where $|F|=m$.
	We construct an instance $I=(\TG,\{s\},k= m-{\ell \choose 2},r=1+\ell+{\ell \choose 2})$ of \TemporalDeleting, where
	$\TG := (V,(E_i)_{i\in[2]})$~is the temporal graph given by
	\begin{align*}
			V &\coloneqq U \cup \{s\}
			\cup \left\{ e_f  \mvert  f \in F \right\}, & &\\
			E_1 &\coloneqq \left\{ \{s,e_f\}  
			\mvert f \in F \right\}, \quad \text{ and }\quad
					E_2 \coloneqq \left\{ \{e_{\{u,v\}},u\},\{e_{\{u,v\}},v\} \mvert \{u,v\} \in F \right\}.
	\end{align*}
	Note that $I$ can be constructed in polynomial time.

	\textbf{($\Rightarrow$):} Let $C=(V',F')$ be a clique of size $\ell$ in $H$.
	We set $X := \{ (\{s, e_f\},1) \mid f \in F\setminus F'\}$.
	Note that $|X| \leq k$ and 
	that for each edge $f \in F$ we can reach $e_f$ from $s$ 
	if and only if $f \in F'$.
	Hence, by the construction of $\TG$, a vertex $u \in U$ is reachable from $s$ in $\TG \setminus X$ if and only if $u \in V'$.
	Hence, we can reach $1 + {\ell \choose 2} +\ell$ many vertices from $s$ in $\TG \setminus X$.
	Thus, $I$ is a $\yes$-instance.

	\textbf{($\Leftarrow$):} Let $X \subseteq \TE(\TG)$ be a solution for $I$.
	Without loss of generality, we can assume that $X$ does not contain a 
	time-edge~$(\{e_f,u\},2)$, because it can be replaced by $(\{e_f,s\},1)$.
	Observe that at least ${\ell \choose 2}$ vertices from $\left\{ e_f  \mvert  f \in F \right\}$ are reachable from $s$ in $\TG \setminus X$.
	Since $r=1 + {\ell\choose 2} + \ell$, we can reach from $s$ 
	at most $\ell$ vertices from $U$.
	Hence, $U \cap R_{\TG\setminus X}(\{s\})$ must form a clique of size $\ell$ in $H$.
\end{proof}
Due to \cref{thm:del-r-whard}, we know that 
there is presumably no $f(r)\cdot |\TG|^{O(1)}$-time algorithm to decide whether 
we can keep the reachable set of a vertex $s$ of~$\TG$ small (at most $r$ vertices), by deleting at most $k$ time-edges.
However, this changes when we delay (instead of deleting) at most $k$ edges.
Formally, we show the following.
\begin{theorem}
		\label{thm:fpt-for-r}
		\TemporalDelaying{} is solvable in $O(r!\cdot k \cdot |\TG|)$ time.
\end{theorem}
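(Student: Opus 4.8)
The plan is to give a branching algorithm that searches for the reachable set $R^\ast := \reach_{\TG \delay D}(S)$ of an optimal solution $D$, exploiting that $|R^\ast| \le r$ and $S \subseteq R^\ast$. Enumerating all candidate sets of size at most~$r$ would only give an $|V|^{O(r)}$ (XP) running time, so instead I would build the target set $R$ up one vertex at a time and argue that at each of the at most $r$ levels of the search there are at most $r$ sensible choices, so that the search tree has $O(r!)$ leaves.

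First I would isolate the \emph{fixed-target} subproblem: given $R$ with $S \subseteq R$, decide whether at most $k$ delays suffice to guarantee $\reach_{\TG \delay D}(S) \subseteq R$, and if so compute a minimum such~$D$. The key structural observation is that delaying a single non-final time-edge of a temporal path destroys that path whenever the following edge is too early to absorb the extra~$\delta$ (the same mechanism exploited in \cref{lem:delete-to-delay}). I would turn ``block every temporal path from $S$ into $\bar R := V \setminus R$'' into a minimum-cut problem on a static, time-expanded auxiliary graph~$\FlowNet$, in which $S$-$\bar R$ temporal paths become directed paths and path-breaking delays become cuttable arcs; then the minimum number of delays equals the value of a minimum $S$-$\bar R$ cut in~$\FlowNet$. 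Since I only need to know whether this value is at most~$k$, I can cap Ford--Fulkerson at $k+1$ augmenting-path searches, each running in $O(|\TG|)$ time, for a total of $O(k \cdot |\TG|)$ per node of the search tree.

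Second, for the incremental branching I would start from $R = S$ and repeatedly compute the minimum-cut value $c(R)$ from the subroutine above. If $c(R) \le k$, then $R$ can be sealed off within budget and, since I maintain $|R| \le r$ throughout, I accept; if $R$ cannot be sealed within budget, some vertex reachable \emph{out of} $R$ must itself belong to $R^\ast$, and I branch over the candidate vertices to admit into~$R$ (pruning any branch that would push $|R|$ beyond~$r$). The crux of the running-time analysis is to show that the set of candidates worth admitting has size at most~$r$ at each step, so that the branching factor is bounded by~$r$ and the depth by~$r$; I expect this to follow from $|R^\ast| \le r$ together with the frontier structure exposed by the min-cut. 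Multiplying the $O(r!)$ leaves by the $O(k \cdot |\TG|)$ per-node cost then yields the claimed bound.

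The step I expect to be the main obstacle is the exact min-cut characterization of the minimum number of delays for a fixed target~$R$. The delay operation is not a clean deletion: pushing a time-edge by~$\delta$ may leave it usable (hence re-reachable) at a later time, so the naive ``cut all crossing edges'' reduction is simply \emph{wrong}, and the time-expansion has to be engineered so that precisely the path-breaking delays, and no others, appear as cuttable arcs. Getting this encoding correct---and simultaneously proving that an optimal solution can always be assumed to delay only such path-breaking edges---is where I expect the bulk of the technical effort to go, with the $\le r$ bound on the branching factor (needed to obtain $r!$ rather than a weaker $|V|^{O(r)}$ factor) being the second, subtler ingredient.
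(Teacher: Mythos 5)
You have reconstructed the paper's architecture---a max-flow subroutine that decides whether a candidate set $R \supseteq S$ can be sealed off with at most $k$ operations, wrapped in a search tree that grows $R$ one vertex at a time---but the two steps you yourself flag as the main obstacles are exactly the paper's technical content, and your proposal leaves both unresolved; on the first, your plan also points in a direction the paper deliberately avoids. You propose to engineer the time-expanded network so that ``precisely the path-breaking delays'' become the cuttable arcs, and to prove that an optimal solution delays only path-breaking edges. The paper instead makes two moves. First (\cref{lem:slowing-is-enough-delay}), it replaces delaying by \emph{slowing}: the traversal time of the edge is increased by $\delta$ while its departure time stays fixed. Slowing can never be exploited by paths arriving later, which is exactly the phenomenon you worry about; the equivalence of the two problems is proved by an exchange argument on \emph{inclusion-minimal} solutions. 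Second, in the network $\FlowNet(\TG,s,R,\delta)$ a slowed edge is not deleted: besides the unit-capacity arc $(e_1,e_2)$, the construction in \cref{eq:A4} installs an infinite-capacity bypass arc $(v^{t}, w^{t+\gamma(e)+\delta})$, so cutting the unit arc means ``this traversal still exists, but arrives $\delta$ later.'' This is what makes ``cut $=$ slow'' exactly correct (\cref{lem:path-equivalence} and \cref{lem:the-good-case}). Without first passing to slowing, your encoding would have to express, inside a static cut problem, that a delayed edge becomes enterable up to $\delta$ time steps later; you give no construction achieving this, and your fallback claim (that optimal solutions delay only path-breaking edges) is not the statement that is needed, since whether a delay is ``path-breaking'' depends on the other delays chosen.

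The second gap is the branching bound. You expect the candidate set to have size at most $r$ ``from $|R^\ast| \le r$ together with the frontier structure exposed by the min-cut,'' but the cut is the wrong object: the paper argues with the \emph{flow} (\cref{lem:small-choice}). Each $v \in R$ contributes at most one arc into the sink, namely for the last time $t$ with $N_\TG(v,t) \nsubseteq R$ (\cref{eq:A5}); hence a flow of value $k+1$ uses at most $|R|$ sink arcs, and choosing one out-of-$R$ temporal neighbor per used sink arc yields a set $Y$ with $|Y| \le |R|$ and $Y \cap R = \emptyset$. Any solution $X$ with $|X| \le k$ cuts at most $k$ unit-capacity arcs and therefore leaves one of the $k+1$ flow paths intact, so $S$ still reaches some member of $Y$; this is the statement that justifies branching on $Y$. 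Note that the resulting bound is $|Y| \le |R|$, \ie the branching factor at depth $d$ is $d$, not $r$: this is what gives $d!$ nodes at depth $d$ and $O(r!)$ leaves overall. Your uniform estimate (branching factor $r$, depth $r$) only yields $r^{r}$, which does not match the claimed $O(r! \cdot k \cdot |\TG|)$ bound. Finally, you also implicitly need the easy single-source reduction (\cref{lem:single-source}) so that the recursion can start from $R=\{s\}$.
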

The proof of \cref{thm:fpt-for-r} is structured as follows.

\begin{description}
		\item[Step 1 (reduction to slowing):] We reduce \TemporalDelaying{} to an auxiliary problem
				which we call \TemporalSlowing. Here, instead of \emph{delaying} a time-edge 
				(moving it $\delta$~layers forward in time) we \emph{slow} it, \ie increase the time required to traverse it by~$\delta$.
		\item[Step 2 (flow-based techniques):] Our new target now is a fixed-parameter algorithm
				for \TemporalSlowing.
				Since we do not aim to preserve a specific temporal graph class, we simplify the input by replacing $S$ with a single-source $s$.
				Then we transform the temporal graph~$\TG$ into a (non-temporal) directed graph~$D$
				in which the deletion of an edge corresponds to slowing a temporal edge in $\TG$.
				Using this, we derive a max-flow-based polynomial-time algorithm which checks whether 
				the source~$s$ can be prevented from reaching any vertices outside of a given set~$R$
				by slowing at most~$k$ time-edges in $\TG$.
		\item[Step 3 (resulting search-tree):] We are aiming for a search-tree algorithm for \TemporalSlowing{}.
				Let $R$ be a set of vertices and suppose that our max-flow-based algorithm failed to prevent $s$ from reaching any vertices outside of~$R$.
				Now, if there exists a solution for the given instance of \TemporalSlowing{},
				then we can identify less than $\abs{R}$ vertices such that at least one of them will be always reached from~$s$.
				We can then try adding each of them to~$R$, gradually building a search-tree to find the solution.
\end{description}

\noindent
Henceforth the details follow.
Instead of solving \TemporalDelaying{} directly, we reduce it to an auxiliary problem introduced next.
Let $\TGtraversal$ be a temporal graph.
\emph{Slowing} a time-edge $(\{v, w\}, t)$ by $\delta$ refers to increasing $\duration((\{v, w\}, t))$ by $\delta$.
We define $\TG \slow X := (V, (E_i)_{i \in [\tau]}, \gamma')$
where $\gamma'(e) := \gamma(e) + \delta\cdot [e \in X]$.
Our auxiliary problem is the following.
\problemdef{\TemporalSlowingLong (\TemporalSlowing)}
{A temporal graph~$\TGtraversal$, a set of
sources $S\subseteq V$, and integers $k,r,\delta$.}
{Is there a time-edge set $X \subseteq \TE(\TG)$ of size at most $k$ such that $\abs{\reach_{\TG \slow X}(S)} \leq r$?}

By the following, solving an instance of \TemporalSlowing{} also solves \TemporalDelaying{}.
\begin{lemma}
		\label{lem:slowing-is-enough-delay}
		An instance $I = (\TGtraversal,S,k,r,\delta)$ of \TemporalDelaying{} is a \yes-instance
		if and only if $J = (\TG,S,k,r,\delta)$ is a \yes-instance of \TemporalSlowing{}.
\end{lemma}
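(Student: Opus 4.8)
The plan is to prove both directions with essentially the \emph{same} modification set~$X$, relying on a tight path-by-path correspondence between slowing and delaying. The key observation is that for a fixed time-edge sequence $(e_i=(\{v_{i-1},v_i\},t_i))_{i=1}^m$ the arrival time at each~$v_i$ equals $t_i+\duration(e_i)+\delta\cdot[e_i\in X]$ under \emph{both} operations; only the condition for entering the next edge differs. Slowing requires $t_{i+1}\ge t_i+\duration(e_i)+\delta\cdot[e_i\in X]$, while delaying only requires $t_{i+1}+\delta\cdot[e_{i+1}\in X]\ge t_i+\duration(e_i)+\delta\cdot[e_i\in X]$, which is weaker. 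Consequently every temporal path in $\TG\slow X$ is also one in $\TG\delay X$ with the same endpoints, so $\reach_{\TG\slow X}(S)\subseteq\reach_{\TG\delay X}(S)$ for every~$X$. The direction ``$\Rightarrow$'' follows at once: a delaying solution~$X$ for~$I$ satisfies $|\reach_{\TG\slow X}(S)|\le|\reach_{\TG\delay X}(S)|\le r$, so the same~$X$ solves~$J$.

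The converse ``$\Leftarrow$'' is the main obstacle, because delaying can reach strictly more vertices than slowing: a delayed edge $e=(\{u,v\},t)\in X$ may be entered during the window $(t,t+\delta]$, which slowing forbids. I would first \textbf{normalize} a given slowing solution~$X$ by discarding every time-edge that no temporal path from~$S$ actually traverses in $\TG\slow X$, obtaining $X^\ast\subseteq X$. One verifies that $\TG\slow X$ and $\TG\slow X^\ast$ admit the very same temporal paths from~$S$: a path in $\TG\slow X$ never uses an edge of $X\setminus X^\ast$ and hence survives, and conversely a hypothetical new path in $\TG\slow X^\ast$ would have to exploit the speed-up of some first edge $e\in X\setminus X^\ast$, whose unaffected prefix already reaches an endpoint of~$e$ by time~$t$, contradicting that~$e$ is unused in $\TG\slow X$. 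Thus $X^\ast$ is again a slowing solution with $|X^\ast|\le k$, and every $e=(\{u,v\},t)\in X^\ast$ is now \emph{enterable}: $\rho_{\TG\slow X^\ast}(u)\le t$ or $\rho_{\TG\slow X^\ast}(v)\le t$, where $\rho$ denotes the earliest arrival time from~$S$.

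I would then establish $\reach_{\TG\delay X^\ast}(S)\subseteq\reach_{\TG\slow X^\ast}(S)$ through the stronger claim $\rho_{\TG\slow X^\ast}(v)\le\rho_{\TG\delay X^\ast}(v)$ for all~$v$, proved by induction on $\rho_{\TG\delay X^\ast}(v)$ (well-founded as traversal times are positive). Writing $e=(\{u,v\},t)$ for the last edge of an earliest delayed path to~$v$, the case $e\notin X^\ast$ is routine; the only delicate case is $e\in X^\ast$, where the induction hypothesis merely gives $\rho_{\TG\slow X^\ast}(u)\le t+\delta$, one short of the~$\le t$ needed to traverse the slowed copy of~$e$. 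Here enterability closes the gap: either $\rho_{\TG\slow X^\ast}(u)\le t$, so the slowed edge reaches~$v$ at the same time, or $\rho_{\TG\slow X^\ast}(v)\le t$, so~$v$ is reached even earlier. Combined with the universal inclusion of the first paragraph this yields $\reach_{\TG\delay X^\ast}(S)=\reach_{\TG\slow X^\ast}(S)$, whose size is at most~$r$, so $X^\ast$ is a delaying solution of size at most~$k$ and~$I$ is a \yes-instance. In summary, the whole difficulty is concentrated in the backward direction, and it is the normalization step that makes the arrival-time induction go through in the critical $e\in X^\ast$ case.
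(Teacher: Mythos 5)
Your proof is correct and takes essentially the same route as the paper: the forward direction is the same path-translation argument, and in the backward direction both proofs first normalize the slowing solution so that every slowed time-edge is \emph{enterable} (the paper by taking an inclusion-minimal $X$, you by discarding time-edges not traversed by any temporal path from $S$) and then transfer delay-reachability to slow-reachability via an arrival-time argument (your induction on earliest delayed arrival times is the paper's minimal-counterexample contradiction phrased positively). If anything, your explicit verification that the normalization preserves the solution property spells out a step that the paper leaves implicit in its appeal to minimality.
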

\begin{proof}

		\textbf{($\Rightarrow$):} Let $X \subseteq \TE(\TG)$ be a solution for $I$.
		Note that for every temporal $v$-$w$-path $P$ in $\TG \slow X$,
		we can construct a temporal $v$-$w$-path $P'$ in $\TG \delay X$
		by replacing a time-edge $(e,t) \in P$ with $(e,t+\delta)$ whenever $(e,t) \in X$.
		Hence, the reachable set of $s\in S$ in $\TG \delay X$ is a superset 
		of the reachable set of $s$ in $\TG \slow X$. 
		Thus, $J$ is a \yes-instance.

		\textbf{($\Leftarrow$):} Let $X \subseteq \TE(\TG)$ be an inclusion-minimal solution for $J$.
	  	We claim that every vertex reachable from~$S$ in $\TG \delay X$ by some time~$t$
	  	is also reachable from~$S$ in $\TG \slow X$ until time~$t$.
	  	Suppose for contradiction that the claim does not hold true for some vertex~$z$ and let $t$ be the time $S$~reaches~$z$ in $\TG \delay X$.
	  	We may assume $z$ to be chosen to minimize~$t$.
	  	Clearly $t > 0$, \ie $z \notin S$.
	  	Let $P$ be a temporal $s$-$z$-path in $\TG \delay X$ with arrival time~$t$ and $s \in S$.
	  	Let $u$ be the penultimate vertex of $P$ and
	  	$(\{u,z\}, t')$ the last time-edge of~$P$.
	  	By minimality of $t$, $u$ must be reachable from~$S$ by time~$t'$ also in $\TG \slow X$.
	  	Since all time-edges of $\TE(\TG) \setminus X$ appear in $\TG \delay X$ and $\TG \slow X$ with identical traversal times,
	  	the last time-edge $(\{u,z\},t')$ of~$P$ must be in $\TE(\TG \delay X) \setminus (\TE(\TG) \setminus X)$.
	  	Thus $(\{u,z\},t'-\delta) \in X$.
		By minimality of~$X$, there must be a source~$s' \in S$ and a temporal $s'$-$u$-path $P'$ in $\TG \slow X$ reaching either $u$ or $z$ at time~$t'-\delta$.
		If $P'$ reaches~$z$, then this is clearly a contradiction.
		But if $P'$ reaches~$u$, then appending $(\{u, z\}, t-\delta)$ to $P'$ produces a temporal $s'$-$z$-path in $\TG \slow X$ 	arriving at time~$t$,
		thus also a contradiction.
\end{proof}

In the reminder of this section, we show that \TemporalSlowing{} is fixed-parameter tractable, 
when parameterized by $r$.
Formally, we aim for the following theorem, which in turn clearly implies \cref{thm:fpt-for-r} by the means of \cref{lem:slowing-is-enough-delay}.
\begin{theorem}
		\label{thm:slowing-fpt-r}
		\TemporalSlowing{} can be solved in $O(r!\cdot k \cdot |\TG|)$ time.
\end{theorem}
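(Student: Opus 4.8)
The plan is to follow the three-step outline laid out in the paper. I would first reduce the single source assumption, by noting that we may replace the source set $S$ with a single fresh vertex $s$ connected to every vertex of $S$ by time-edges at an initial time step (before all others), with traversal time $0$ or suitably small; slowing these artificial edges can never help, so this does not change solvability. This justifies working with a single source $s$ throughout.

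The heart of the argument is Step 2: a polynomial-time subroutine that, given a candidate reachable set $R$ with $s \in R$, decides whether we can slow at most $k$ time-edges so that $\reach(\{s\})$ stays inside $R$. The plan here is to build a static directed graph $D$ out of $\TG$ whose vertices are (vertex, time) pairs — the standard time-expanded or static expansion — together with edges encoding that traversing a time-edge $(\{v,w\},t)$ moves us from $(v,t)$ to $(w,t+\duration)$. The crucial design point is that slowing a time-edge by $\delta$ should correspond exactly to \emph{deleting} a single edge in $D$, so that preventing $s$ from escaping $R$ by slowing $\le k$ time-edges becomes a minimum edge-cut / max-flow question in $D$. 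Concretely, I would set up a cut that separates the copies of $s$ from all copies of vertices in $V \setminus R$, with the edges representing slowable traversals carrying unit capacity and the ``structural'' edges carrying infinite capacity; an integral min-cut of size $\le k$ then translates back, via Lemma~\ref{lem:slowing-is-enough-delay}, into a feasible slowing. The max-flow computation runs in time linear in $|\TG|$ times the cut value (using the bounded capacity $k$, e.g.\ $k$ rounds of Ford--Fulkerson on a graph of size $O(|\TG|)$), giving the $O(k \cdot |\TG|)$ factor.

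Step 3 is the branching that produces the $r!$ factor. I would initialize $R = \{s\}$ (or $R = S$) and repeatedly call the max-flow subroutine. If it succeeds with a cut of size $\le k$, we answer \yes. If it fails, the point is that the min-cut must have size exceeding $k$, or that no cut keeps $s$ inside $R$; in either case the residual/flow structure should reveal a small set of at most $|R|-1$ (really, fewer than $|R|$) vertices outside $R$, at least one of which any valid solution must allow $s$ to reach. We branch by adding each such vertex to $R$ and recurse. Since $|R|$ may grow up to $r$ and at each level we branch into at most $|R| < r$ children while $|R|$ strictly increases, the search tree has at most $r!$ leaves, and each node does $O(k \cdot |\TG|)$ work, yielding the claimed $O(r! \cdot k \cdot |\TG|)$ bound.

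The main obstacle I anticipate is Step 3 — specifically, proving that when the cut fails we can \emph{certify} a set of fewer than $|R|$ vertices one of which must lie in every solution's reachable set. This requires extracting structural information from the failed min-cut (e.g.\ identifying vertices on the source side of every minimum cut, or using an augmenting-path/flow-decomposition argument) and arguing that avoiding all of them simultaneously is impossible for any slowing of $\le k$ edges. Getting the count below $|R|$ (rather than merely bounded by $r$) is what makes the branching factor controlled and the tree size $r!$ rather than something worse; this counting argument, together with making sure the reduction in Step 2 correctly equates slowing with single-edge deletion in $D$ even in the presence of parallel traversal opportunities, is where the real care is needed.
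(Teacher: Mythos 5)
Your proposal reproduces the paper's three-step outline, but at both points where that outline has to be turned into a proof you leave a gap, and in Step~2 the gap is not a mere omission: taken literally, your construction decides the wrong problem. You say that slowing a time-edge ``should correspond exactly to deleting a single edge in $D$'' and set up a min-cut with unit capacities on the slowable traversal arcs. But slowing is not deletion: a slowed time-edge can still be traversed, merely arriving $\delta$ later. If the time-expanded graph contains only the unit-capacity traversal arc for each time-edge, then cutting it models \emph{deletion}, and a cut of size at most $k$ does not translate into a feasible slowing --- the slowed edges may still carry the spread outside $R$. (This distinction is exactly the difference between \TemporalDelaying{} and \TemporalDeleting{}, and the latter is \Wone-hard in $r$ by \cref{thm:del-r-whard}, so any construction blurring the two should raise alarm.) The paper's network $\FlowNet(\TG,s,R,\delta)$ resolves this by installing, next to the unit-capacity path $v^t \to e_1 \to e_2 \to w^{t+\gamma(e)}$, an \emph{infinite-capacity bypass arc} $(v^{t}, w^{t+\gamma(e)+\delta})$ representing traversal of the slowed edge (\cref{eq:A4}); cutting $(e_1,e_2)$ then means the fast traversal is gone but the slow one remains, which is precisely the slowing semantics and is what makes \cref{lem:path-equivalence} and hence \cref{lem:the-good-case} true. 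Your closing remark about ``parallel traversal opportunities'' gestures at this issue but supplies no mechanism, and your appeal to \cref{lem:slowing-is-enough-delay} to translate a cut back into a slowing is misplaced --- that lemma relates delaying to slowing, not cuts to slowings.

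Step~3 you explicitly flag as ``the main obstacle,'' and indeed you do not prove it. The paper's argument (\cref{lem:small-choice}) is short once the network is built correctly: the sink $\tn{z}$ receives, for each $v \in R$, at most \emph{one} arc $(v^t,\tn{z})$, placed at the last time $t$ at which $v$ still has a temporal neighbor outside $R$ (\cref{eq:A5}). Take a flow $f$ of value $k+1$ (rerouted to prefer sink arcs), let $H$ be the set of pairs $(v,t)$ whose sink arc carries positive flow, so $\abs{H} \leq \abs{R}$, and pick for each $(v,t) \in H$ one vertex of $N_\TG(v,t)\setminus R$; this is $Y$. For any solution $X$ with $\abs{X}\le k$, the corresponding cut has capacity at most $k < \abs{f}$, so some flow path avoids it entirely; its penultimate vertex is some $v^t$ with $(v,t)\in H$, and \cref{lem:path-equivalence} turns it into a temporal path in $\TG \slow X$ reaching $v$ by time $t$. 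Since slowing never removes edges, $N_{\TG\slow X}(v,t) = N_\TG(v,t)$, so the picked vertex of $Y$ is reached --- again the step where slowing, as opposed to deletion, is essential. Note also that the bound $\abs{Y}\le\abs{R}$ already suffices (you insisted on getting below $\abs{R}$, but that is unnecessary): with branching factor at most $d$ at recursion depth $d$ and depth bounded by $r$, the search tree has $O(r!)$ nodes, each costing $O(k\cdot\abs{\TG})$ time. In summary, your skeleton matches the paper's, but the two lemmas that constitute the actual proof are missing, and the network as you describe it would certify deletion-confinement rather than slowing-confinement.
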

The remainder of this section is dedicated to proving \cref{thm:slowing-fpt-r}.
The advantage of considering \TemporalSlowing{} instead of \TemporalDelaying{} is that we do not have to deal with new time-edges 
appearing due to the delay operation. 
This allows us to translate the reachability of a temporal graph to a (non-temporal) directed graph specially tailored to \TemporalSlowing{}.
In particular, the removal of some edges in the directed graph 
corresponds to slowing the corresponding time-edges by~$\delta$ in the temporal graph.
Before giving the details of the construction, we first reduce to the case where $S$ is a singleton.

\begin{lemma}
		\label{lem:single-source}
		Given an instance $I = (\TGcompact,S,k,r,\delta)$
		of \TemporalSlowing{},
		we can construct in linear time a instance $J = (\TG',\{s\},k,r+1,\delta)$ of \TemporalSlowing{} such that
		$I$ is a \yes-instance if and only if $J$ is a \yes-instance.
\end{lemma}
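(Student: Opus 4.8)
The plan is to introduce a single fresh super-source $s$ that instantaneously ``activates'' all of $S$ at the very start of time, so that in $\TG'$ the vertex $s$ reaches exactly the vertices that $S$ reaches in $\TG$, plus itself. Concretely, I would set $V' := V \cup \{s\}$ for a new vertex $s$, shift every original time-edge $(\{u,v\},t) \in \TE(\TG)$ to $(\{u,v\}, t+\delta+1)$ (keeping traversal time~$1$), and add one time-edge $(\{s,v\},1)$ of traversal time~$1$ for each source $v \in S$. Finally I set $r' := r+1$ and leave $k$ and $\delta$ unchanged. Since we only shift the existing time-edges and add at most $|S| \le |\TE(\TG)|$ new ones (together with one new vertex), $\TG'$ is computable in linear time.

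The crucial design choice is the shift by $\delta+1$: afterwards all original time-edges occur at times $\ge \delta+2$, whereas an $s$-$v$-edge delivers $v$ at time~$2$ if left alone and at time $2+\delta = \delta+2$ if slowed. In either case the set of original time-edges usable after arriving at~$v$ is the same (namely all of them), so slowing an $s$-$v$-edge never destroys any temporal path emanating from~$v$. I would formalize this as the observation that, for every $X' \subseteq \TE(\TG')$, writing $X'_0$ for $X'$ with all $s$-incident time-edges removed, one has $\reach_{\TG' \slow X'}(\{s\}) = \reach_{\TG' \slow X'_0}(\{s\})$. Consequently we may assume \wilog{} that a solution for $J$ slows no $s$-incident time-edge, and then it is exactly the image under the shift of some $X \subseteq \TE(\TG)$.

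With slowing of $s$-edges ruled out, I would establish the reachability correspondence: for every $X \subseteq \TE(\TG)$ with shifted image $X' \subseteq \TE(\TG')$,
\[
\reach_{\TG' \slow X'}(\{s\}) = \{s\} \cup \reach_{\TG \slow X}(S).
\]
The inclusion ``$\supseteq$'' follows by prepending the unslowed edge $(\{s,v\},1)$ to any temporal path from a source $v \in S$ and shifting all its times by $\delta+1$. The inclusion ``$\subseteq$'' follows because every temporal path leaving $s$ must first traverse some $s$-$v$-edge, reaching $v \in S$ at time~$2$, after which its suffix un-shifts to a temporal path from $v$ in $\TG \slow X$; no original edge is lost, since all of them occur at times $\ge \delta+2 \ge 2$. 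Both directions use that shifting all times by a constant preserves the temporal-path condition and does not affect traversal times.

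The equivalence then follows by counting. If $X$ solves $I$, then $|\reach_{\TG' \slow X'}(\{s\})| = 1 + |\reach_{\TG \slow X}(S)| \le 1 + r = r'$ and $|X'| = |X| \le k$, so $J$ is a \yes-instance. Conversely, given a solution $Y$ for $J$, the observation above lets us assume $Y$ slows no $s$-edge, so $Y$ is the shift of some $X \subseteq \TE(\TG)$ with $|X| = |Y| \le k$; the correspondence then yields $1 + |\reach_{\TG \slow X}(S)| \le r' = r+1$, hence $|\reach_{\TG \slow X}(S)| \le r$ and $I$ is a \yes-instance. The only genuine obstacle is precisely the one the shift is designed to remove: because the sources of the original instance are permanently active and cannot be delayed, we must ensure the solver gains nothing by slowing the artificial edges out of $s$, and choosing the time shift to dominate $\delta$ is exactly what neutralizes that option.
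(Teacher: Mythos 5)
Your proposal is correct and matches the paper's proof essentially verbatim: the paper likewise adds a fresh super-source $s$ with time-edges $(\{s,s'\},1)$ for $s' \in S$, shifts all original layers by $\delta+1$ (via $E'_{i+\delta+1} := E_i$), sets the budget to $r+1$, and relies on the same key observation that slowing an $s$-incident edge has no effect because a slowed $s$-edge still arrives by time $\delta+2$, before any shifted original edge. Your write-up merely spells out in more detail what the paper compresses into ``slowing an edge in $E_1'$ has no effect.''
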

\begin{proof}
		We set $\TG' := (V \cup \{s\},(E'_i)_{i\in[\lifetime+\delta+1]})$ 
		where $s$~is a new vertex, 
		$E'_1 := \{\{s, s'\} \mid s' \in S \}$,
		$E'_i := \emptyset$ for all $i \in [\delta+1]\setminus\{1\}$, and
		$E'_{i+\delta+1} := E_i$ for all $i \in [\lifetime]$.
		Observe that slowing an edge in~$E_1'$ has no effect.
		Thus, $I$ is a \yes-instance if and only if $J$ is a \yes-instance.
		Clearly, $J$ can be computed in linear time.
\end{proof}

A \emph{network} $(D,c)$ consists of a directed graph $D=(V,A)$ and \emph{edge capacities} $c \colon A\rightarrow \mathbb N_0 \cup \{\infty\}$.
A function $f \colon A \rightarrow \mathbb N \cup \{0\}$ is an \emph{$s$-$z$-flow} for two distinct vertices $s,z \in V$ if
\begin{itemize}
		\item $\forall e \in A \colon f(e) \leq c(e)$ and
		\item $\forall v \in V \setminus \{s,z\} \colon \sum_{(u,v) \in A} f((u,v)) = \sum_{(v,u) \in A} f((v,u))$.
\end{itemize}
The \emph{value} of $f$ is denoted by $ |f| \coloneqq \sum_{(s,v)\in A} f((s,v))$.
An arc set $C \subseteq A$ is an \emph{$s$-$z$-cut} of a network $((V,A),c)$ if $s,z \in V$ and there is no $s$-$z$-path in $(V, A \setminus C)$.
The \emph{capacity} of the $s$-$z$-cut $C$ is $c(C) := \sum_{e\in C} c(e)$.

Let $\TGtraversal$ be a temporal graph.
We define the \emph{temporal neighborhood} of a vertex $v\in V$ at time point $t \in [\lifetime]$ 
as the set~$N_\TG(v,t) := \bigcup_{i=t}^{\lifetime} N_{(V,E_i)}(v)$ containing all neighbors of~$v$ in the layers $t$ through $\lifetime$.

For any $s \in R \subseteq V$ and $\delta \in \NN$,
we define the flow network
$\FlowNet(\TG, s, R, \delta) := (D, c)$ where 
$D = (V', A)$ is the directed graph defined by
\newcommand{\flowsource}{s^0}
\newcommand{\flowsink}{\tn{z}}
\begin{align}
	V' :={}& \{ \flowsource{}, \flowsink{} \}
		\cup \left\{
			\begin{array}{l}
				e_1,e_2, v^t,w^t,v^{t+\gamma(e)},w^{t+\gamma(e)}, \\
				v^{t+\gamma(e)+\delta} ,w^{t+\gamma(e)+\delta}
			\end{array}
		\mvert v, w \in R \text{ and } e=(\{v,w\},t) \in \TE(\TG)  \right\} 
	\nonumber
\end{align}
and
\begin{align}
	A :={}&  \left\{ (v^t,v^{t'}) \mvert  v^t \in V', t' = \min\{i \mid i > t \text{ and } v^i \in V'\} \neq \infty\right\}
		\label{eq:A3} \\
		& \cup \left\{
			\begin{array}{l}
				(v^t,e_1),(w^t,e_1),
				(e_1,e_2),\\
				(e_2,v^{t+\gamma(e)}),(e_2,w^{t+\gamma(e)}),\\
				(v^{t},w^{t+\gamma(e)+\delta}),
				(w^{t},v^{t+\gamma(e)+\delta})
			\end{array}
			\mvert v, w \in R \text{ and } e=(\{v,w\},t) \in \TE(\TG)  \right\}
		\label{eq:A4}\\
		&\cup \left\{ (v^t,\flowsink{})\mvert
			v \in R,  t = \max \left\{i \mid N_{\TG}(v, i) \nsubseteq R\right\} \neq -\infty
			\right\}.
		\label{eq:A5}
\end{align}
and we set $c((e_1,e_2))=1$ for all $e \in \TE(\TG)$
and $c(a) = \infty$ for all other $a \in A$.
Consider \cref{fig:digraph} for an illustration.
\begin{SCfigure}[\sidecaptionrelwidth][t!]

		\begin{tikzpicture}
				\node[vertex,label=$v$] (v) at (0,0) {};
				\node[vertex,label=$w$] (u) at (2,0) {};
				\draw[edge] (v) to node[fill=white] {$3$} (u);
				\draw[edge,gray,dotted] (v) to (-0.3,-0);
				\draw[edge,gray,dotted] (v) to (-0.3,-0.25);
				\draw[edge,gray,dotted] (v) to (-0.3,0.25);

				\draw[edge,gray,dotted] (u) to (2.3,-0);
				\draw[edge,gray,dotted] (u) to (2.3,-0.25);
				\draw[edge,gray,dotted] (u) to (2.3,0.25);

				\begin{scope}[xshift=4cm,yshift=1cm]
				\node[vertex,label=left:$v^3$] (v3) at (0,0) {};
				\node[vertex,label=left:$v^4$] (v4) at (0,-1.5) {};
				\node[vertex,label=left:$v^5$] (v5) at (0,-2) {};
				\draw[thick,->] (v3) to (v4);
				\draw[thick,->] (v4) to (v5);

				\node[vertex,label=right:$w^3$] (u3) at (2,0) {};
				\node[vertex,label=right:$w^4$] (u4) at (2,-1.5) {};
				\node[vertex,label=right:$w^5$] (u5) at (2,-2) {};
				\draw[thick,->] (u3) to (u4);
				\draw[thick,->] (u4) to (u5);

				\draw[edge,gray,dotted] (v3) to (-0.2,-0);
				\draw[edge,gray,dotted] (v3) to (-0.2,-0.25);
				\draw[edge,gray,dotted] (v3) to (-0.2,0.25);

				\draw[edge,gray,dotted] (v4) to (-0.2,-1.5);
				\draw[edge,gray,dotted] (v4) to (-0.2,-1.75);
				\draw[edge,gray,dotted] (v4) to (-0.2,-1.25);

				\draw[edge,gray,dotted] (v5) to (-0.2,-2);
				\draw[edge,gray,dotted] (v5) to (-0.2,-2.25);
				\draw[edge,gray,dotted] (v5) to (-0.2,-1.75);

				\draw[edge,gray,dotted] (u3) to (2.2,-0);
				\draw[edge,gray,dotted] (u3) to (2.2,-0.25);
				\draw[edge,gray,dotted] (u3) to (2.2,0.25);

				\draw[edge,gray,dotted] (u4) to (2.2,-1.5);
				\draw[edge,gray,dotted] (u4) to (2.2,-1.75);
				\draw[edge,gray,dotted] (u4) to (2.2,-1.25);

				\draw[edge,gray,dotted] (u5) to (2.2,-2);
				\draw[edge,gray,dotted] (u5) to (2.2,-2.25);
				\draw[edge,gray,dotted] (u5) to (2.2,-1.75);

				\node[vertex,label=left:$e_1$] (e1) at (1,-0.5) {};
				\node[vertex,label=left:$e_2$] (e2) at (1,-1.25) {};
				\draw[thick,dashed,->] (e1) to (e2);
				\draw[thick,->] (v3) to (e1);
				\draw[thick,->] (e2) to[bend right](u4);
				\draw[thick,->] (e2) to[bend left](v4);
				\draw[thick,->] (u3) to (e1);

				\draw[thick,->] (v3) to[bend right=45] (u5);
				\draw[thick,->] (u3) to[bend left=45] (v5);
				\end{scope}
		\end{tikzpicture}
		\caption{\emph{Left}: An excerpt of a temporal graph $\TG$ containing the time-edge $e=(\{v,w\},3)$ and $\gamma(e)=1$.
				\emph{Right}: An excerpt of the flow network $\FlowNet(\TG, s, R, \delta=1	)$ showing the corresponding part for $e$, 
	where solid arc have capacity~$\infty$ and the dashed arc has capacity~$1$.}
	\label{fig:digraph}
\end{SCfigure}
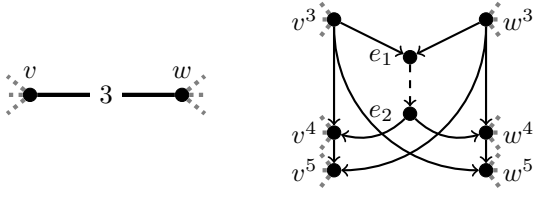

\begin{lemma}
		\label{lem:digraph-flow}
		For any given $\TGtraversal{}$, $s \in R \subseteq V$, and $k, \delta \in \NN$,
		one can test in $O(k \cdot \abs{\TG})$~time whether $\FlowNet(\TG, s, R, \delta)$ has a $\flowsource{}$-$\flowsink{}$-flow of value at least $k+1$
		and compute such a flow or a maximum flow otherwise.
\end{lemma}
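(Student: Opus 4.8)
The plan is to compute the requested flow by the classical augmenting-path method of Ford and Fulkerson, but to abort as soon as the flow value reaches $k+1$. Two observations make this efficient. First, the network $\FlowNet(\TG, s, R, \delta) = (D,c)$ has only $O(|\TG|)$ vertices and arcs and can be built in $O(|\TG|)$ time. Second, since we only ever need to certify a flow of value $k+1$, it suffices to perform at most $k+1$ augmentations, each of which costs $O(|\TG|)$ time.

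For the size bound, observe that every time-edge $e=(\{v,w\},t)\in\TE(\TG)$ with $v,w\in R$ contributes only a constant number of vertices (the two vertices $e_1,e_2$ and the six time-stamped copies) and, via \cref{eq:A4}, a constant number of arcs. The ``time'' arcs of \cref{eq:A3} form, for each vertex $v\in R$, a single path through its existing time copies, contributing $O(|\TG|)$ arcs in total, and the sink arcs of \cref{eq:A5} number at most $|R|$. Computing the target $t'=\min\{i \mid i>t \text{ and } v^i\in V'\}$ of each time arc, as well as the value $\max\{i \mid N_\TG(v,i)\nsubseteq R\}$ for each sink arc, amounts to one pass over the time-edges incident to each vertex (sorted by time stamp), which is done in $O(|\TG|)$ time overall. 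Hence both $|V'|$ and $|A|$ lie in $O(|\TG|)$, and $D$ together with $c$ is constructible within the claimed time.

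For the computation itself I would maintain an integral $\flowsource$-$\flowsink$-flow $f$, initially zero, and repeatedly search the residual network for an $\flowsource$-$\flowsink$-path by breadth-first search, which runs in $O(|V'|+|A|)=O(|\TG|)$ time. Whenever such a path is found, I augment along it by $\min\{b,\,k+1-|f|\}$ units, where $b$ is the residual bottleneck capacity of the path. Since all finite capacities equal $1$ and the flow stays integral, every residual bottleneck is a positive integer (or $\infty$), so each augmentation raises $|f|$ by at least $1$. Consequently, after at most $k+1$ iterations exactly one of two things happens: either $|f|=k+1$, and we output $f$ as the desired flow of value at least $k+1$; or breadth-first search finds no augmenting path, in which case the max-flow min-cut theorem certifies that $f$ is a maximum flow (necessarily of value at most $k$) and we output it. In both cases we have performed $O(k)$ breadth-first searches, for a total running time of $O(k\cdot|\TG|)$.

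The point that needs care is the interaction of the infinite capacities with the early-abort strategy: a maximum flow may well be infinite (precisely when some $\flowsource$-$\flowsink$-path avoids all of the capacity-$1$ arcs $(e_1,e_2)$), so a naive ``push the full bottleneck'' step would be ill-defined. Capping each augmentation at $k+1-|f|$ sidesteps this and ensures the output is a genuine flow of value exactly $k+1$ in the affirmative case, while the observation that each augmentation still adds at least one unit is what bounds the number of iterations by $k+1$ rather than by the (possibly large or infinite) maximum flow value. The remaining points — that residual breadth-first search is linear, that integrality is preserved throughout, and that the special arcs of \cref{eq:A3} and \cref{eq:A5} are computable in linear time — are routine.
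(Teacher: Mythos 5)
Your proof is correct and takes essentially the same approach as the paper: construct $\FlowNet(\TG, s, R, \delta)$ in $O(|\TG|)$ time and run at most $k+1$ augmenting-path rounds of Ford--Fulkerson, stopping early once the flow value reaches $k+1$ or no augmenting path exists. Your extra care with the infinite-capacity arcs (capping each augmentation at $k+1-|f|$, since the maximum flow can indeed be unbounded in this network) is a valid and welcome detail that the paper's terse proof glosses over.
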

\begin{proof}
		Note, that the flow network $\FlowNet(\TG, s, R, \delta)$ can be computed in $O(|\TG|)$ time
		by iterating over $\TE(\TG)$ first forward and then backwards once.
		Then we can compute a flow of value~$k+1$ or of maximum value, whichever is smaller,
		by running at most $k+1$ rounds of the Ford-Fulkerson algorithm \cite{ford_fulkerson_1956}.
		This gives a overall running time of $O(k \cdot \abs{\TG})$ time.
\end{proof}

\begin{lemma}\label{lem:path-equivalence}
	Let $\TGtraversal{}$, $s \in R \subseteq V$, $\delta \in \NN$,
	and $((V', A), c) := \FlowNet(\TG, s, R, \delta)$.
	Let $X \subseteq \TE(\TG)$ and $C := \{(e_1, e_2) \in A \mid e \in X\}$.
	For any $x^t \in V'$, there is a $\flowsource$-$x^t$-path in $(V', A \setminus C)$
	if and only if $\TG \slow X$ contains a temporal $s$-$x$-path with arrival time at most~$t$.
\end{lemma}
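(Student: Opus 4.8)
The plan is to prove both implications by translating paths directly, guided by the invariant that the time-superscripts encountered along a $\flowsource$-$x^t$-path in $(V', A \setminus C)$ are exactly the arrival times of a corresponding temporal walk in $\TG \slow X$. Throughout I write $\gamma'(e) := \gamma(e) + \delta \cdot [e \in X]$ for the traversal times in $\TG \slow X$, and I use freely that for reachability a temporal walk (respecting the time constraints but possibly revisiting vertices) can be shortcut to a temporal path with no later arrival, and likewise that a walk in a digraph can be shortcut to a path. Since $V'$ contains time-copies only of vertices of $R$, the correspondence is with temporal walks all of whose vertices lie in $R$; this matches the network exactly (the escape from $R$ is detected separately by the sink arcs of \cref{eq:A5}), and I will construct and read off only such walks.

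For the implication from temporal paths to network paths, let $P = (e^{(j)} = (\{v_{j-1}, v_j\}, t_j))_{j=1}^m$ be a temporal $s$-$x$-path in $\TG \slow X$ with $v_0 = s$, $v_m = x$, all $v_j \in R$, and arrival time $a_m := t_m + \gamma'(e^{(m)}) \le t$. Writing $a_j := t_j + \gamma'(e^{(j)})$ for the arrival time at $v_j$ (and $a_0 := 0$), I would build $Q$ by concatenating, for $j = 1, \dots, m$, a block of time-progression arcs (\cref{eq:A3}) from $v_{j-1}^{a_{j-1}}$ forward to $v_{j-1}^{t_j}$ — legal because $t_j \ge a_{j-1}$ and because both copies lie in $V'$, since $e^{(j)}$ (resp.\ $e^{(j-1)}$, or $\flowsource = s^0$ when $j=1$) is a within-$R$ time-edge incident to $v_{j-1}$ — followed by one traversal block for $e^{(j)}$. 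If $e^{(j)} \notin X$, the gadget arcs $v_{j-1}^{t_j} \to e^{(j)}_1 \to e^{(j)}_2 \to v_j^{t_j + \gamma(e^{(j)})} = v_j^{a_j}$ are available, as $(e^{(j)}_1, e^{(j)}_2) \notin C$; if $e^{(j)} \in X$, that arc lies in $C$, so I instead use the slow arc $v_{j-1}^{t_j} \to v_j^{t_j + \gamma(e^{(j)}) + \delta} = v_j^{a_j}$ of \cref{eq:A4}. A final block of time-progression arcs leads from $x^{a_m}$ to $x^t$ (possible since $x^t \in V'$ by hypothesis and $a_m \le t$). The resulting $\flowsource$-$x^t$-walk avoids $C$ and shortcuts to the desired path.

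For the converse, I would induct on the length of a $\flowsource$-$x^t$-path $Q$ in $(V', A \setminus C)$, maintaining the invariant that each prefix ending at a copy $v^i$ yields a temporal walk from $s$ to $v$ in $\TG \slow X$, with all vertices in $R$, reaching $v$ by time at most $i$. The base case is the trivial walk at $\flowsource = s^0$. In the step, a time-progression arc (\cref{eq:A3}) leaves the walk unchanged and only increases $i$; a gadget traversal $v^i \to e_1 \to e_2 \to w^{i+\gamma(e)}$ — possible only when $e = (\{v,w\}, i) \notin X$, since otherwise $(e_1, e_2) \in C$ — appends $e$, which in $\TG \slow X$ departs at time $i$ (legal, as the walk arrives by time $\le i$) and arrives at $w$ by time $i + \gamma(e)$, matching the new superscript; and a slow arc $v^i \to w^{i+\gamma(e)+\delta}$ likewise appends $e$, reaching $w$ by time $i + \gamma'(e) \le i + \gamma(e) + \delta$ (exactly when $e \in X$, earlier otherwise), again at most the new superscript. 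When a gadget or slow arc returns to the same vertex I treat the block as waiting. Applying the invariant at the endpoint $x^t$ and shortcutting the walk to a path gives a temporal $s$-$x$-path in $\TG \slow X$ arriving by time at most $t$.

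The main obstacle is the bookkeeping that ties the cut $C$ to the traversal-time increase. One must verify that deleting $(e_1, e_2)$ for exactly the edges $e \in X$ forces a detour through the slow arc, and that the slow arcs — present for all within-$R$ time-edges regardless of $X$ — never create spurious shortcuts: they only ever certify arrival by time $i + \gamma(e) + \delta$, which is a valid (if loose) upper bound on the $\TG \slow X$-arrival $i + \gamma'(e)$ whether or not $e$ is slowed. Getting this asymmetry right — the gadget yields the tight arrival $i + \gamma(e)$ but survives only when $e \notin X$, while the slow arc yields $i + \gamma(e) + \delta$ unconditionally — is what makes the two superscripted destinations $w^{i+\gamma(e)}$ and $w^{i+\gamma(e)+\delta}$ encode precisely $\gamma'(e) = \gamma(e) + \delta[e \in X]$. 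The remaining care is the within-$R$ restriction and the routine walk-to-path shortcutting in both the temporal graph and the digraph.
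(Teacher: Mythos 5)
Your proof takes essentially the same route as the paper's: translate paths arc-by-arc in both directions, using the time-progression arcs to wait, the gadget arcs $(v^{b},e_1),(e_1,e_2),(e_2,w^{b+\gamma(e)})$ exactly when $e \notin X$, and the always-present slow arc $(v^{b},w^{b+\gamma(e)+\delta})$ otherwise, so that the destination superscript encodes $\gamma'(e)=\gamma(e)+\delta\cdot[e\in X]$. It is correct, and your explicit treatment of the within-$R$ restriction, of gadget exits returning to the same vertex, and of walk-to-path shortcutting is in fact slightly more careful than the paper's own write-up, which glosses over these points.
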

\begin{proof}
	Let $\gamma'$ be the traversal time function of $\TG \slow X$, \ie $\gamma'(e) := \gamma(e)+\delta\cdot [e \in X]$.
	
	\textbf{($\Leftarrow$):} 
		Let $P$~be a temporal $s$-$x$-path in $\TG \slow X$ for some vertex~$x \in V$
		and let $t$~be the arrival time of $P$.
		Then we construct a $\flowsource$-$x^t$-path $\hat{P}$ in $(V', A \setminus C)$ as follows.
		Start with $\hat{P}$ being just the vertex~$s^0$ and perform the following two steps for every time-edge $e = (\{v, w\}, b)$ of~$P$ in order.
		\begin{enumerate}
			\item
				Note that the currently last vertex of~$\hat{P}$ is $v^c$ for some~$c \leq b$.
				As long as~$c < b$, append to~$\hat{P}$ the arc $(v^c, v^d)$ where $d > c$ is chosen minimal (cf. \cref{eq:A3}).
				Afterwards, the currently last vertex of~$\hat{P}$ is~$v^b$.
			\item
				If $e \notin X$, \ie $(e_1, e_2) \notin C$,
				then append to~$\hat{P}$ the arcs $(v^b, e_1)$, $(e_1, e_2)$, and $(e_2, w^{b + \gamma(e)})$ (cf. \cref{eq:A4}).
				Otherwise, if $e \in X$, \ie $\gamma'(e) = \gamma(e) + \delta$,
				then append to~$\hat{P}$ the arc $(v^b, w^{b + \gamma(e) + \delta})$.
				Note that in both cases the new last vertex of $\hat{P}$ is $w^{b + \gamma'(e)}$.
		\end{enumerate}
		
	\textbf{($\Rightarrow$):}
		Let $P$ be a $\flowsource$-$x^t$-path in $(V', A \setminus C)$.
		Then we construct a $s$-$x$-path~$P'$ with arrival time at most~$t$ in $\TG \slow X$ as follows.
		Start with $P'$ being just the vertex $s$ (with arrival time $0$) and repeat the following steps until all arcs of~$P$ have been processed.
		\begin{enumerate}
		\item
			Let $b$ be the arrival time of $P'$ and $v$ the last vertex.
			Note that the last vertex of $P$ is $v^c$ for some $c \geq b$.
			Ignore all arcs of~$P$ up to the last arc containing $v^c$ for some $c \geq b$.
		\item
			If the next three arcs in $P$ are $(v^c, e_1), (e_1, e_2), (e_2, w^{c + \gamma(e)})$
			for some time-edge $e = (\{v, w\}, c) \in \TE(\TG)$,
			then append to~$P'$ that time-edge~$e$.
			Note that, by assumption, $(e_1, e_2) \notin C$, thus $e \notin X$, and thus the arrival time of~$e$ is $c + \gamma'(e) = c + \gamma(e)$.
		\item
			Otherwise the next arc in $P$ must be $(v^c, w^{c + \gamma(e) + \delta})$ for some time-edge $e = (\{v, w\}, c) \in \TE(\TG)$.
			Then append to~$P'$ that time-edge~$e$.
			Note that the arrival time of~$e$ is $c + \gamma'(e) \leq c + \gamma(e) + \delta$.
		\end{enumerate}
\vspace{-\baselineskip}
\end{proof}

We now show that we can use \cref{lem:digraph-flow} to check whether $s$ can be prevented from reaching any vertices outside of~$R$
by slowing at most $k$~time-edges by~$\delta$ each.
\begin{lemma}
		\label{lem:the-good-case}
		For any given $\TGtraversal{}$, $s \in R \subseteq V$, $\delta \in \NN$, and $k \in \NN \cup \{ 0 \}$,
		the maximum $\flowsource{}$-$\flowsink{}$-flow in $\FlowNet(\TG, s, R, \delta)$ has value at most $k$
		if and only if
		there is a set~$X$ of at most~$k$ time-edges such that
		$s$ can not reach any vertices outside of~$R$ in $\TG \slow X$.
\end{lemma}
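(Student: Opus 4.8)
The plan is to invoke the max-flow--min-cut theorem and to argue that the finite $\flowsource$-$\flowsink$-cuts of $\FlowNet(\TG, s, R, \delta)$ are precisely the slowing sets that seal off $R$. First I would record the bijection between cuts and slowing sets: every arc except the $(e_1, e_2)$ arcs has infinite capacity, so any cut of finite capacity consists solely of these unit-capacity arcs, and the map sending $X \subseteq \TE(\TG)$ to $C(X) := \{(e_1, e_2) \mid e \in X\}$ identifies time-edge sets with sets of unit-capacity arcs in a capacity-preserving way, i.e.\ $c(C(X)) = |X|$. By max-flow--min-cut, the maximum flow has value at most $k$ if and only if there is a cut of capacity at most $k$, equivalently a set $X$ with $|X| \le k$ such that $C(X)$ is an $\flowsource$-$\flowsink$-cut. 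Hence the whole statement reduces to the claim: for any $X \subseteq \TE(\TG)$, the arc set $C(X)$ is an $\flowsource$-$\flowsink$-cut if and only if $s$ cannot reach any vertex outside $R$ in $\TG \slow X$.

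To prove the claim I would combine \cref{lem:path-equivalence} with the structure of the sink arcs in \cref{eq:A5}. For the direction where $C(X)$ is not a cut, take an $\flowsource$-$\flowsink$-path in $(V', A \setminus C(X))$; its final arc must be some $(v^t, \flowsink)$ with $v \in R$ and $t = \max\{i \mid N_\TG(v, i) \nsubseteq R\}$, so the prefix is an $\flowsource$-$v^t$-path, and \cref{lem:path-equivalence} yields a temporal $s$-$v$-path in $\TG \slow X$ arriving by time at most $t$. By the choice of $t$, the vertex $v$ has a neighbour $u \notin R$ in some layer $t' \ge t$; since $v$ is reached by time $t \le t'$, appending the time-edge $(\{v, u\}, t')$ produces a temporal $s$-$u$-path, so $s$ escapes $R$. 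Conversely, suppose some $u \notin R$ is reachable from $s$ in $\TG \slow X$ along a temporal path $P$. As $s \in R$, let $v$ be the last vertex of $P$ lying in $R$ and let $(\{v, x\}, j)$ be the time-edge by which $P$ leaves $R$ (so $x \notin R$); the $s$-$v$-prefix of $P$ arrives by time at most $j$. Then $x \in N_\TG(v, j) \setminus R$ witnesses $j \le t := \max\{i \mid N_\TG(v, i) \nsubseteq R\}$, so the sink arc $(v^t, \flowsink)$ exists, and \cref{lem:path-equivalence} gives an $\flowsource$-$v^t$-path in $(V', A \setminus C(X))$, which extends by that sink arc (not contained in $C(X)$) to an $\flowsource$-$\flowsink$-path. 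Hence $C(X)$ is not a cut, which establishes the claim and with it the lemma.

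The hard part will be pinning down the sink arcs precisely: I must ensure that the single time index $t = \max\{i \mid N_\TG(v, i) \nsubseteq R\}$ attached to $(v^t, \flowsink)$ captures \emph{exactly} the ability of $v$ to leave $R$ once it has been reached, so that reachability of $\flowsink$ and escape from $R$ coincide in both directions. Concretely, I have to verify that reaching $v$ by time $t$ always suffices to traverse to a vertex outside $R$ -- using that slowing only increases traversal times and never removes the escaping time-edge, and that a temporal path may wait at $v$ -- and, conversely, that $t$ is chosen large enough that every escape route from $v$ is recognised through \cref{lem:path-equivalence}; the monotonicity of $N_\TG(v, \cdot)$ in its second argument is what makes this indexing consistent and guarantees that $v^t \in V'$ is the tail of a genuine sink arc. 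The remaining steps -- the cut/slowing-set bijection and the appeal to max-flow--min-cut -- are routine.
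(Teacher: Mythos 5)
Your proposal is correct and takes essentially the same approach as the paper's proof: both invoke the max-flow--min-cut theorem, note that any finite-capacity cut consists solely of the unit-capacity $(e_1,e_2)$ arcs, and combine \cref{lem:path-equivalence} with the sink arcs of \cref{eq:A5} to show that source--sink paths avoiding the cut correspond exactly to temporal paths escaping~$R$. The paper phrases both directions as proofs by contradiction (choosing a minimal escaping temporal path so that its penultimate vertex lies in~$R$), while you prove the same two contrapositives directly via one standalone claim; this is an organizational difference, not a different argument.
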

\begin{proof}
		Write $((V',A),c) := F := \FlowNet(\TG, s, R, \delta)$.
		
		\textbf{($\Rightarrow$):} Let the maximum $\flowsource$-$\flowsink$-flow~$f$ in~$F$ have value at most~$k$.
		Moreover, let $C$ be a $\flowsource$-$\flowsink$-cut of minimum capacity.
		From the max-flow min-cut theorem \cite{maxflowmincut}, we know that $c(C) \leq k$.
		Note that $C \subseteq \{ (e_1,e_2) \in A \mid e \in \TE(\TG) \}$, 
		since all other edges have infinite capacity.
		Hence, $|C| \leq k$.
		Now set $X := \{ e \in \TE(\TG)\mid (e_1,e_2) \in C \}$.
		Assume towards a contradiction that there is a temporal $s$-$x$-path $P$ 
		in $\TG \slow X$ for some $x \in V \setminus R$.
		We may take $P$ to be minimal, thus the penultimate vertex~$y$ of $P$ is contained in $R$.
		By \cref{lem:path-equivalence} there is a $\flowsource$-$y^t$-path $\hat{P}$ in $(V', A \setminus C)$ where $t$~is the time $P$~reaches~$y$.
		The fact that $P$ afterwards proceeds to~$x$ and $\cref{eq:A5}$ in the definition of $\FlowNet$ imply that
		$(V', A \setminus C)$~contains a path from $y^t$ to $\flowsink$.
		This contradicts $C$~being a $\flowsource$-$\flowsink$-cut in $(V', A)$.

		\textbf{($\Leftarrow$):}
		Let $X$ be a time-edge set as assumed.
		By assumption $\reach_{\TG \slow X}(\{s\}) \subseteq R$.
		We claim that $C= \{ (e_1,e_2) \mid e\in X \} \subseteq A$ is a $\flowsource$-$\flowsink$-cut in~$(V', A)$,
		which implies that the maximum value of an $\flowsource$-$\flowsink$-flow in~$F$ is at most $c(C) \leq k$ \cite{maxflowmincut}.
		So suppose towards a contradiction that there is a $\flowsource$-$\flowsink$-path~$P$ in~$(V', A \setminus C)$.
		Let $x^t\in V$ be the penultimate vertex of~$P$.
		Then there is a $s$-$x$-path~$P'$ in $\TG \slow X$ with arrival time at most~$t$ by \cref{lem:path-equivalence}.
		The final arc of~$P$ is $(x^t, \flowsink)$.
		Hence, $(x^t, \flowsink)$ must be contained in \cref{eq:A5}, \ie we can extend $P'$ by some time-edge to end at a vertex in $V \setminus R$.
		This contradicts our assumption $\reach_{\TG \slow X}(\{s\}) \subseteq R$.
\end{proof}

If $\FlowNet(\TG, s, R, \delta)$ contains a $\flowsource$-$\flowsink$-flow of value~$k+1$, then we want to find
a small set~$Y \subseteq V(\TG) \setminus R$ of vertices such that $Y \cap \reach_{\TG \slow X}(s) \neq \emptyset$
for every $X \subseteq \TE(\TG)$ with $\abs{X} \leq k$ and $\abs{\reach_{\TG \slow X}(s)} \leq r$.

\begin{lemma}
		\label{lem:small-choice}
		Let $\TGtraversal{}$, $s \in R \subseteq V$, $\delta, r \in \NN$, and $k \in \NN \cup \{ 0 \}$.
		Assume that $\FlowNet(\TG, s, R, \delta)$ has a $\flowsource{}$-$\flowsink{}$-flow of value~$k+1$.
		We can compute in $O(k \cdot |A|)$ time a set $Y \subseteq V \setminus R$ of size at most $\abs{R}$
		such that $Y \cap \reach_{\TG \slow X}(s) \neq \emptyset$
		holds for every $X \subseteq \TE(\TG)$ with $\abs{X} \leq k$ and $\abs{\reach_{\TG \slow X}(s)} \leq r$.
\end{lemma}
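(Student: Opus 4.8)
The plan is to let $Y$ consist of one ``latest exit neighbor'' for each vertex of $R$. For every $v \in R$ that has a neighbor outside $R$ in at least one layer, set $t_v := \max\{i \mid N_\TG(v,i) \nsubseteq R\}$ --- exactly the quantity governing the arcs into $\flowsink$ in \cref{eq:A5} --- and fix some vertex $y_v \in N_{(V,E_{t_v})}(v) \setminus R$. Define $Y := \{\, y_v \mid v \in R \text{ has a neighbor outside } R \,\}$. Then $Y \subseteq V \setminus R$ and $\abs{Y} \le \abs{R}$ are immediate. Moreover, the pairs $(v, t_v)$ are precisely the tails of the arcs entering $\flowsink$, which are already available from the construction of $\FlowNet(\TG, s, R, \delta)$; reading off a witness neighbor $y_v$ for each of them costs one additional pass over $\TE(\TG)$, so $Y$ is computed within the claimed $O(k \cdot \abs{A})$ time.

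First I would establish that, under the hypothesis, every candidate solution actually leaves $R$. Since $\FlowNet(\TG, s, R, \delta)$ admits a $\flowsource$-$\flowsink$-flow of value $k+1$, its maximum flow exceeds $k$, so by the contrapositive of \cref{lem:the-good-case} there is \emph{no} set of at most $k$ time-edges whose slowing confines $s$ to $R$. Hence for every $X \subseteq \TE(\TG)$ with $\abs{X} \le k$ --- in particular for every $X$ that additionally satisfies $\abs{\reach_{\TG \slow X}(s)} \le r$ --- the source $s$ reaches some vertex outside $R$ in $\TG \slow X$.

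The core of the argument is then to turn ``reaching some vertex outside $R$'' into ``reaching a specific latest exit $y_u \in Y$''. Fix such an $X$ and a temporal $s$-path in $\TG \slow X$ that leaves $R$; let $z'$ be its first vertex outside $R$ and $u \in R$ the preceding vertex, reached via a time-edge in layer $c$. Since $z' \in N_{(V,E_c)}(u) \setminus R$, we have $N_\TG(u, c) \nsubseteq R$ and therefore $c \le t_u$; in particular $t_u$ and $y_u$ are well defined. The prefix of the path ending in $u$ is a temporal $s$-$u$-path in $\TG \slow X$ whose arrival time $a$ satisfies $a \le c \le t_u$, where the first inequality is just the chronological condition for appending the layer-$c$ time-edge. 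Consequently the time-edge $(\{u, y_u\}, t_u)$, which is present in $\TG \slow X$ regardless of whether it is slowed, can be appended, yielding a temporal $s$-$y_u$-path. Thus $y_u \in \reach_{\TG \slow X}(s) \cap Y$, proving the claim.

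I expect the main obstacle to be the correctness step of the last paragraph: one must argue that reaching \emph{any} vertex outside $R$ already forces reaching one of the few pre-selected latest neighbors $y_u$. The crucial points are that slowing can only postpone but never destroy reachability, so the edge $(\{u, y_u\}, t_u)$ remains traversable, and that $u$ can always ``wait'' until its latest exit time $t_u$ because $t_u \ge c \ge a$; it is precisely the maximality in the definition of $t_u$ (mirroring \cref{eq:A5}) that simultaneously keeps $\abs{Y}$ bounded by $\abs{R}$ and guarantees that every possible exit is captured.
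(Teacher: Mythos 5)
Your proof is correct, but it takes a genuinely different route from the paper's. The paper constructs $Y$ from the \emph{support of the flow}: it first normalizes the flow $f$ of value $k+1$ so that it never bypasses an available sink arc, then collects the pairs $(v,t)$ with $f((v^t,\flowsink))>0$ and picks one witness neighbor outside $R$ for each; correctness is then shown by a cut argument (the arcs $C$ corresponding to a solution $X$ have capacity at most $k<k+1$, so some flow-carrying path avoids $C$ and reaches $\flowsink$) combined with \cref{lem:path-equivalence} to translate that surviving flow path into a temporal path. You instead define $Y$ purely combinatorially — one latest-exit neighbor $y_v$ for \emph{every} vertex $v\in R$ with a sink arc, independent of any flow — and prove correctness by invoking \cref{lem:the-good-case} as a black box (its contrapositive yields that every $X$ with $\abs{X}\le k$ lets $s$ escape $R$) followed by a direct ``first exit'' argument: if the escape uses the time-edge $(\{u,z'\},c)$ with $u\in R$, then $c\le t_u$, and since slowing never removes a time-edge nor changes its layer, the prefix arriving at $u$ by time $a\le c\le t_u$ can be extended along $(\{u,y_u\},t_u)$ to reach $y_u\in Y$. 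Your version is more elementary within this proof — it avoids both the flow-rerouting normalization and any re-use of \cref{lem:path-equivalence}, and computes $Y$ in a single pass without touching the flow — while the paper's flow-based $Y$ can be strictly smaller (only flow-carrying exits), which may prune the search tree in practice even though both constructions meet the same worst-case bound $\abs{Y}\le\abs{R}$ and hence give the same running-time guarantee for \cref{thm:slowing-fpt-r}. One small point worth making explicit in your write-up: the witness $y_v\in N_{(V,E_{t_v})}(v)\setminus R$ exists precisely because of the maximality of $t_v$ (a neighbor outside $R$ in a layer strictly after $t_v$ would contradict it), and, like the paper, your argument never actually uses the hypothesis $\abs{\reach_{\TG\slow X}(s)}\le r$, so both proofs establish the stronger claim for all $X$ with $\abs{X}\le k$.
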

\begin{proof}
		Let $f$ be a $\flowsource$-$\flowsink$-flow of value $k+1$ in $((V',A), c) := \FlowNet(\TG, s, R, \delta)$.
		We may assume $f$ to never use an arc $(v^t, w^{t'})$ whenever $A$~contains some arc $(v^b, \flowsink)$ with $b \geq t$,
		as we could otherwise redirect~$f$ to use that latter arc (of infinite capacity) instead.
		Note that performing this modification can be done in $O(k \cdot |A|)$ time.
		Now set
		\begin{align*}
				H  &:= \left\{ (v,t) \mvert v^t \in V' \text{ and } (v^t, \flowsink) \in A, f((v^t, \flowsink)) > 0\right\}
		\end{align*}
		By \cref{eq:A5}, we have $|H|\leq |R|$ and $N_\TG(v,t)\setminus R \not = \emptyset$ for all $(v,t) \in H$.
		Construct the vertex set $Y$ by picking for each $(v,t) \in H$ one arbitrary vertex from $N_\TG(v,t)\setminus R$.
		Hence $|Y| \leq |H| \leq |R|$ and $Y \cap R = \emptyset$.

		It remains to prove that $\reach_{\TG \slow X}(s) \cap Y \neq \emptyset$,
		with $X \subseteq \TE(\TG)$ being an arbitrary solution for the \TemporalSlowing{}-instance $(\TG,\{s\},k,r,\delta)$.
		Define $C:= \{ (e_1,e_2) \in A \mid e \in X \}$.
		Since $f$ has value $k+1 > c(C)$, there is a $\flowsource$-$v^t$-path $P$ in $(V', A \setminus C)$ 
		where each edge $e\in E(P)$ has $f(e) > 0$ and $(v,t) \in H$.
		By \cref{lem:path-equivalence}, there is a temporal $s$-$v$-path $P'$ in $\TG \slow X$ with arrival time at most $t$.
		Note that through~$P'$, vertex $s$ can reach $v$ as well as all vertices of $N_{\TG \slow X}(v, t) = N_{\TG}(v, t)$.
		Hence, the vertex $u \in Y$ which we picked for $(v,t)\in H$ from~$N_\TG(v,t)\setminus R$
		is in $\reach_{\TG \slow X}(s)$ --- thus the claim is proven.

\end{proof}

Now we are ready to prove of \cref{thm:slowing-fpt-r}. 
The corresponding search-tree algorithm is listed as \cref{algo:search-tree}.
\begin{algorithm2e}[t]
\addtocounter{theorem}{1}
		\KwIn{An instance $I=(\TG,\{s\},k,r,\delta)$ of \TemporalSlowing{}.}
		\KwOut{\yes{} if $I$ is a \yes-instance and otherwise \no.}
		\SetKwProg{Fn}{function}{ is}{end}
		\KwRet{$g(\{s\})$, where}\;
		\Fn{$g(R)$}{
				Compute a $\flowsource$-$\flowsink$-flow $f$ in $\FlowNet(\TG, s, R, \delta)$ by \cref{lem:digraph-flow}\label{line:flow}\;
				\lIf(){$f$ is of value at most $k$}{\KwRet{\yes}\label{line:5}  }
				\lIf{$|R|\geq r$}{ \KwRet{\no}}\label{line:3}
				Compute a set $Y \subseteq V \setminus R$ by \cref{lem:small-choice}\label{line:6}\;
				\ForEach{$v \in Y$}{
						\lIf{$g(R \cup \{v\})=$ \yes}{\KwRet{\yes}\label{line:9}}
				}
				\KwRet{\no}
		}
\caption{Pseudocode of the algorithm behind \cref{thm:slowing-fpt-r}.}
\label{algo:search-tree}
\end{algorithm2e}

\begin{proof}[Proof of \cref{thm:slowing-fpt-r}]
Let $I=(\TG,S,k,r,\delta)$ be the given instance of \TemporalSlowing{}.
By \cref{lem:single-source}, we can assume $S=\{s\}$ after linear time preprocessing.
We now prove that \cref{algo:search-tree} solves $(\TG, \{s\}, k, r, \delta)$ in $\bigO(r^{2r}\cdot k \cdot |\TG|)$ time.

Let $I$ be a \no-instance.
Observe that line~\ref{line:3} ensures that at all times $\abs{R} \leq r$.
Then, by \cref{lem:the-good-case}, $\FlowNet(\TG, s, R, \delta)$
will for all $s \in R \subseteq V$ have a $\flowsource$-$\flowsink$-flow of value $k+1$.
Hence, line~\ref{line:5}, and thus \cref{algo:search-tree} will never return \yes.

Let $I$ be a \yes-instance.
Thus there is a set~$X$ of at most $k$ time-edges such that $\abs{\reach_{\TG \slow X}(s)} \leq r$.
We claim that $g(R')$ returns \yes{}
for all $R'$ with $s \in R' \subseteq \reach_{\TG \slow X}(s)$.
We prove this by reverse induction on $\abs{R'}$.
In the base case where $R = \reach_{\TG \slow X}(s)$, $g(R)$ returns \yes{} by \cref{lem:the-good-case}.
Now assume the claim to hold whenever $\abs{R} = q$
and let $R$ be of size $q-1$ with $s \in R \subseteq \reach_{\TG \slow X}(s)$.
Assume that $\FlowNet(\TG, s, R, k, \delta)$ has a $\flowsource$-$\flowsink$-flow of value $k+1$,
otherwise we are done (by line~\ref{line:5}).
By \cref{lem:small-choice}, 
the set~$Y$ computed in line~\ref{line:6} contains a vertex $u \in \reach_{\TG \slow X}(s) \setminus R$.
Thus, $g(R \cup \{u\})$ returns \yes{} by induction hypothesis.
Hence, by line~\ref{line:9}, $g(R)$ return \yes{}, completing the induction.
In particular, $g(\{s\})$ returns \yes, therefore \cref{algo:search-tree} is correct.

To bound the running time, note that each call~$g(R)$ makes at most $\abs{Y} \leq |R|$ recursive calls by \cref{lem:small-choice}. 
In each of these recursive calls the cardinality of $R$ increases by one until $|R| = r$, 
so the recursion depth is at most~$r$ by line~\ref{line:3}.
Hence, we can observe by an inductive argument that the search tree has $d!$ many nodes at depth $d$, where the root is at depth $1$.
Hence, the search tree has at most $r!$ many leaves and thus $\bigO(r!)$ many nodes in total.
By \cref{lem:digraph-flow} and \cref{lem:small-choice}, lines~\ref{line:flow}~and~\ref{line:6}
take at most $O(k \cdot |\TG|)$ time.
Hence, the overall running time is bounded by $\bigO(r!\cdot k \cdot |\TG|)$.
\end{proof}

\section{A Polynomial-Time Algorithm for Forests} 
In this section we present an algorithm that solves \TemporalDeleting{} and
\TemporalDelaying{} in polynomial time on temporal graphs where the underlying
graph is a tree or a forest.
This is a quite severe yet well-motivated restriction of the
input~\cite{enright2018deleting,enright2021assigning,EnrightMMZ19}, since it could serve as the starting point for FPT-algorithms for ``distance-to-forest''-parameterizations.
\begin{theorem}
		\TemporalDeleting{} and \TemporalDelaying{} are polynomial-time solvable
		if the underlying graph is a forest.
\end{theorem}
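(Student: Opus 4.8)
The plan is to root each tree of the forest arbitrarily and run a bottom-up dynamic program, finally combining the trees by a knapsack over the budget~$k$. The key structural fact is that in a tree the underlying path between any two vertices is unique, so a temporal $s$-$z$-path must traverse this unique edge sequence; hence the entire behaviour is captured by the earliest arrival time $a(v)$ of each vertex, and $v$ belongs to the reachable set exactly when $a(v) < \infty$. Since arrival times only ever take values of the form $t+\duration(e)$ or $t+\duration(e)+\delta$ for time-edges $e=(\cdot,t)\in\TE(\TG)$, there are only $\bigO(|\TE(\TG)|)$ relevant time values, which keeps everything polynomial. Both problems fit the same framework: deleting removes a time-edge while delaying replaces its arrival value by the shifted one, so the two variants differ only in the per-edge transition cost.

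For a rooted subtree $T_v$ I would tabulate $A_v(\tau,b)$, the minimum number of reachable vertices inside $T_v$ achievable with budget $b$ spent on the edges strictly inside $T_v$, under the assumption that $v$ is reached at time exactly~$\tau$ (with $\tau=\infty$ meaning $v$ is never reached) and that no source inside $T_v$ reaches $v$ strictly before~$\tau$. The subtlety, and the main reason the state must be chosen carefully, is that reachability is \emph{bidirectional}: a vertex may be reached from a source lying above it (entering $T_v$ through the edge to its parent) or from a source inside $T_v$ that sends the spread upward. I index the table by the \emph{actual} arrival time~$\tau$ precisely so that the reachable count inside $T_v$ depends only on~$\tau$, irrespective of the direction from which the spread arrives; the contribution coming from above is then deferred to the parent.

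The transition at~$v$ processes the children by a budget convolution. For a child~$c$ reached through the edge~$f$, and for a fixed pair $(\tau,\tau_c)$ of arrival times of $v$ and~$c$, I would compute the minimum cost of a modification of the time-edges of~$f$ that simultaneously (i)~prevents $v$ from reaching~$c$ before~$\tau_c$ (downward) and (ii)~prevents $c$ from reaching~$v$ before~$\tau$ (upward). Because deleting a time-edge blocks both directions and delaying shifts both, this is a greedy computation over the time-edges of~$f$ that are dangerous for the pair $(\tau,\tau_c)$, and it is exactly here that the deleting and delaying operations diverge. Adding $+1$ for~$v$ itself whenever $\tau<\infty$ and convolving the children's contributions over the budget yields $A_v$; the answer for a single tree is $\min_{b\le k,\,\tau}A_{\mathrm{root}}(\tau,b)$, and the forest is assembled by one further knapsack that distributes~$k$ among the trees and checks whether the total reachable count stays below~$r$.

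The hard part will be proving that this state semantics is correct. I would show, by monotonicity, that $A_v(\tau,b)$ is nonincreasing in~$\tau$ (reaching~$v$ later can only shrink the reachable part of~$T_v$) and that the constraints ``the from-below time of $v$ and every upward arrival are~$\ge\tau$'' guarantee that no assignment of the $\tau$-values can underestimate the true count: every globally consistent choice of actual arrival times is a feasible run of the program attaining the real number of reachable vertices, whereas any run using a too-small~$\tau$ only over-counts. Consequently the program's minimum equals the true optimum. The remaining ingredients---bounding the number of relevant time values, verifying the greedy per-edge cost for both the deleting and the delaying operation, and checking that all convolutions run in polynomial time---are routine.
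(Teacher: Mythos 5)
Your proposal is correct and follows essentially the same route as the paper's proof: root each tree, run a bottom-up dynamic program indexed by vertex, remaining budget, and (a polynomially bounded set of) arrival times, compute for each edge and each pair of claimed arrival times the minimum number of operations making the two sides consistent in both directions, and assemble the forest by a final knapsack over the budget~$k$. The remaining differences are organizational rather than substantive: the paper handles \TemporalDeleting{} via the forest-preserving reduction of \cref{lem:delete-to-delay} to a weighted generalization of \TemporalDelaying{} with undelayable edges, unfolds high-degree vertices into binary trees instead of convolving over all children, and records where the first arrival comes from in an explicit flag~$\iota$, whereas you treat deletion directly as an alternative per-edge cost and replace the flag by the conservative invariant that upward spread from a subtree is assumed to start at the claimed arrival time of its root --- which, as your two-sided (feasibility/over-counting) argument indicates, is sound.
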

Actually we even provide an polynomial-time algorithm for a 
generalized version of \TemporalDelaying{}.
Then, polynomial-time solvability of  \TemporalDeleting{} follows 
from \cref{lem:delete-to-delay}, since it is forest preserving. 
We define the generalized problem as follows:
\newcommand{\TemporalDelayingOnTrees}{\textsc{Weighted} \TemporalDelaying \textsc{on Forests}}
\problemdef{\TemporalDelayingOnTrees}{
		A temporal graph~$\TGtraversal$ whose underlying graph is a forest, 
		a weight function~$w \colon V \rightarrow \mathbb N \cup \{0,\infty\}$,
		a set~$F \subseteq \TE(\TG)$ of undelayable time-edges,
		a set of sources~$S\subseteq V$, 
		and integers $k,r, \delta$.}
{Does there exists a time-edge set $X \subseteq \TE(\TG) \setminus F$ of size at most $k$ such that~$w(\reach_{\TG\delay X}(S)) \leq r$?}

In the remainder of this section, we show how to solve this problem using dynamic programming in polynomial time.
Informally speaking, our dynamic program works as follows.
As a preprocessing step we unfold vertices of large degree, reducing to an equivalent instance of maximum degree~3.
Then we root each underlying tree at an arbitrary vertex and build a dynamic programming table, starting at the leaves.
More precisely, we compute a table entry for each combination of a vertex $v$, a budget $k$, a time step $t$, and a flag
indicating whether $v$~is first reached from a child or from its parent.
This table entry then contains a minimum reachable subset of the subtree rooted at~$v$ that can be achieved by applying~$k$~delay operations to that subtree.
\begin{theorem}
		\label{thm:poly-forests}
		\TemporalDelayingOnTrees{} is polynomial-time solvable if the underlying graph is a forest.
\end{theorem}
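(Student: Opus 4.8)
The plan is to solve \TemporalDelayingOnTrees{} by bottom-up dynamic programming over a rooted forest. Since a temporal path never leaves a connected component, I would first treat each tree of the underlying forest separately and afterwards combine the per-tree results by a simple knapsack over the budget: spending $b_i$ delays in the $i$-th tree with $\sum_i b_i \le k$ and summing the minimum reachable weights. Before the main recursion I would perform the preprocessing announced in the proof sketch, namely to \emph{unfold} each vertex $v$ of degree larger than $3$ into a short path of copies, attach one originally incident edge to each copy, and link consecutive copies by \emph{undelayable} time-edges (this is exactly what the set $F$ is for). Giving weight $w(v)$ to a single copy and weight $0$ to the others, and using that the copies are reachability-equivalent, turns this into an equivalent instance of maximum degree $3$; the weighted/undelayable generalization is designed precisely to make this gadget legal. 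After rooting each tree arbitrarily, every non-root vertex then has at most two children, so all transitions become binary merges.

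For the main table I would index entries by a vertex $v$, a budget $b \in \{0,\dots,k\}$, a \emph{first-reach time} $t$ of $v$, and a flag recording whether $v$ is first reached through its parent edge ($W_\downarrow$) or from within its own subtree $T_v$, the latter including the case that $v$ is itself a source ($W_\uparrow$). The time $t$ ranges over the polynomial-size set $\Theta := \{0\} \cup \{\,t' + \duration(e),\ t' + \delta + \duration(e) \mid e=(\{\cdot,\cdot\},t') \in \TE(\TG)\,\}$ of achievable arrival times, together with the symbol $\infty$ for ``never reached''. The entry stores the minimum weight of $\reach_{\TG\delay X}(S) \cap V(T_v)$ over all $X$ delaying exactly $b$ time-edges inside $T_v$ that are consistent with $v$ first being reached at exactly time $t$ from the indicated direction. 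The flag is essential because infection enters and leaves $T_v$ only through $v$: the $W_\downarrow$ entries describe infection flowing \emph{downward} into $T_v$ at the prescribed time, while the $W_\uparrow$ entries describe infection originating at a source inside $T_v$, reaching $v$ at time $t$, and available to be handed \emph{upward} to the parent.

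The transitions process the tree from the leaves upward. For a leaf the entries are immediate. For an internal vertex $v$ with children $c_1,c_2$, for each hypothesized first-reach time $t$ and each split $b=b_0+b_1+b_2$, I would determine for each child $c_j$ the earliest time $v$ can infect it: given $v$ reached at $t$ and a chosen subset of the $v$-$c_j$ time-edges delayed (paid from $b_0$, forbidden on edges of $F$), the arrival at $c_j$ is the smallest $t''+\duration(e)$, respectively $t''+\delta+\duration(e)$ if delayed, over usable $v$-$c_j$ time-edges with $t''\ge t$, respectively $t''+\delta\ge t$. Maximizing this earliest arrival under a delay budget is a threshold subproblem that I would precompute, giving for every pair $(t,t_{c_j})$ the minimum number of $v$-$c_j$ edges that must be delayed so that $c_j$ is not reached before $t_{c_j}$. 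Combining with the appropriate child entries ($W_\downarrow$ for a child infected from above, $W_\uparrow$ for the origin child that supplies $v$ in the $W_\uparrow$ case) and adding $w(v)\cdot[\,t\neq\infty\,]$ yields the value at $v$; at the root one reads off the global minimum and compares it to $r$.

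The step I expect to be the main obstacle is the bookkeeping around the two infection directions together with the fact that a subtree can be infected both from above and from an internal source: the true first-reach time of $v$ is the minimum of the time offered by the parent and the time produced inside $T_v$, and the weight charged to $T_v$ must be consistent with whichever is smaller while still reporting upward the earliest time at which $v$ could in turn infect its parent. Keeping these couplings mutually consistent, so that no reachable vertex is double counted or missed, is the delicate part of both the table definition and the correctness argument; the flag, the ``exactly time $t$'' convention, and taking minima over directions are the devices that make it go through. For the running time, the table has $\bigO(n\cdot k\cdot\abs{\Theta})$ entries with $\abs{\Theta}=\bigO(\abs{\TE(\TG)})$, each obtained from a binary merge over a budget split and a choice of child arrival times, the threshold subproblems are precomputed in polynomial time, and the per-tree knapsack is polynomial as well; altogether this gives a polynomial-time algorithm and proves \cref{thm:poly-forests}.
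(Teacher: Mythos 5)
Your proposal follows essentially the same route as the paper's proof: the same degree-reduction gadget using undelayable, zero-traversal-time edges and weight redistribution, the same table indexed by vertex, budget, exact first-reach time, and a flag distinguishing infection from within the subtree versus through the parent, the same per-edge minimum-delay threshold functions, and the same knapsack combination of the per-tree optima over the forest. The only point to watch is that your unidirectional threshold (``$c_j$ not reached before $t_{c_j}$'') must be complemented by its reverse and bidirectional variants --- for the origin child that must reach $v$ at exactly time $t$, and for a non-origin child infected by its own internal sources, where neither direction of the connecting edge may transmit too early (the paper's $\kappa$ and $\hat{\kappa}$) --- but this fits into your framework without structural change.
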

\begin{proof}
Assume for now that the underlying graph of $\TG$ is a tree, rooted at an arbitrary leaf.
We denote by $T_v$ the subtree with root $v \in V$.
We use the reaching time $\infty$ to denote ``never''.
By convention, a vertex can reach itself at time~$0$.
Define $\NN^* := \NN \cup \{0,\infty\}$.

We first show how to transform the underlying graph into a binary tree. This
will highly simplify the description of the dynamic programming table.
Replace each vertex $v$ of degree $\deg(v) > 3$ by a path on $\deg(v) - 2$ new vertices,
where each edge of that path is undelayable, appears at each time step and always has traversal time 0.
Distribute the edges formerly incident to $v$ among the new vertices such that each of them has degree~$3$.
Set the weight of the path's first vertex to the weight of $v$, and the weight of all other path vertices to~$0$.
Note that this modification produces an equivalent instance of maximum degree at most $3$,
while increasing the number of vertices only by a constant factor.

We extend the notion of reachability to vertex-time pairs $(s, t) \in S \times [\lifetime]$
by saying that~$(s, t)$ reaches $v \in V$ in $\TG$ if there exists a temporal $s$-$v$-path starting at time~$t$ or later.
For a set $A \subseteq V \times [\lifetime]$, $R_\TG(A)$ is the set of all vertices, reachable from any member of $A$ in $\TG$.
We say a vertex~$v$ is reached \emph{through} another vertex~$w$ if there is a temporal path from a source~$s \in S$ to $v$ that uses~$w$.

Let $v \in V$, $k \in \NN$.
Define $\Tt_{v, k}$ as the set of temporal graphs obtained from $T_v$ by applying up to $k$ delay operations.
Partition $\Tt_{v,k}$ into $\{\Tt_{v,k,t} \mid t \in \NN^*\}$,
where $\Tt_{v,k,t}$ contains those graphs in which $v$ is reached from $S \cap V(T_v)$ exactly at time $t$.
Finally, we set for each $t \in \NN^*$
\begin{align*}
D[v, k, t, \false] &=
\min\left\{ w\bigl(\reach_T(S \cap V(T_v)) \bigr) \mvert T \in \Tt_{v, k, t} \right\} \quad\text{and}\\
D[v, k, t, \true] &= \min\left\{ w\bigl(\reach_T(S \cap V(T_v)) \cup \reach_T(\{(v, t)\})  \bigr) \mvert T \in \bigcup_{t' \geq t}\Tt_{v, k, t'} \right\} .
\end{align*}
where the minimum of an empty set is $\infty$ by convention.
It is convenient to also define these entries as $\infty$ whenever $k < 0$.
Roughly speaking, $D[v,k,t,\iota]$ contains the minimal weight reached in $T_v$ under the assumption that up to~$k$ delay operations are applied to $T_v$, that $v$ is first reached at time~$t$, and that
\begin{itemize}
		\item $v$ is reached by a source in $S \cap V(T_v)$ at time~$t$ if $\iota = \false$, 
		\item $v$ is reached by a source in $S \setminus V(T_v)$ at time~$t$ if $\iota = \true$.
\end{itemize}
Note that $v$ might be reached simultaneously from $S \cap V(T_v)$ and $S \setminus V(T_v)$.

We next show how to compute $D[v,k,t,\iota]$ recursively, starting at the leaf
vertices.
Observe that $D[v,k,t,\iota] = \infty$ whenever $v \in S$ is a source and $t > 0$.
Thus, this case shall be excluded in the following.

\subparagraph*{If $v$ has no children.}
If $\iota = \false$ and $v \notin S$ and $t < \infty$, then $D[v,k,t,\false] = \infty$
as there is no way that $v$ can be reached from a source in $S \cap V(T_v) = \emptyset$.
Otherwise,
\[
	D[v,k,t,\iota] = w(v) \cdot [t < \infty]\,.
\]

\subparagraph*{If $v$ has exactly one child $v'$.}
If $\iota = \false$ and $v \notin S$, then $v$ must be reached through $v'$ at time~$t$.
In this case the minimal total weight reached in $T_{v'}$ is
\[
	D_1 := \min_{t' \leq t} D[v', k-\kappa(t', t), t', \false],
\]
where $\kappa(t', t)$~is the minimal number of delays that need to occur on the edge~$\{v, v'\}$
to ensure that $(v', t')$~reaches~$v$ at time~$t$ but not earlier.
(Set~$\kappa(t', t) = \infty$ if this is impossible.)
Consequently, if $v \notin S$, then
\[
D[v,k,t,\false] = w(v) \cdot  [t < \infty]  + D_1 \,.
\]

If $\iota = \true$ or $v \in S$, then there are two possibilities.
If $v'$ is reached through $v$ at time~$t'$, with $t'$~being the first time $v'$ is reached from~$S$,
then the minimal total weight reached in $T_{v'}$ is 
\[
	D_2 := \min_{t' \geq t} D[v', k-\kappa(t, t'), t', \true] \,.
\]
Otherwise, $v'$ must be reached from a source in $S \cap V(T_{v'})$ at time~$t'$, thus
the minimal total weight reached in~$T_v'$ is
\[D_3 := \min_{t'} D[v', k-\hat{\kappa}(t', t), t', \false],\]
where $\hat{\kappa}(t', t)$ is the minimal number of delays that need to occur on $\{v, v'\}$ to ensure that
$(v, t)$ cannot reach~$v'$ before time~$t'$ and
$(v', t')$ cannot reach~$v$ before time~$t$.
(Again, set $\hat{\kappa}(t', t) = \infty$ if this is impossible.)
Thus we obtain for the case that $\iota = \true$ or $v \in S$ that
\[
	D[v, k, t, \iota] = w(v)\cdot [t < \infty]  + \min \{D_2, D_3\} \,.
\]

\subparagraph*{If $v$ has two children $v', v''$.}
The situation is similar to that of only one child vertex, although more possible cases have to be distinguished.
We omit the tedious details.
However, it is clear that $D[v, k, t, \iota]$ can be computed by simply trying all possible tuples $(t', t'', k', k'', i', i'', \iota', \iota'')$
where $t', t''$ are the times at which $v', v''$ are reached; $k', k''$ are the number of delays occurring in $T_{v'}, T_{v''}$;
$i', i''$ are the number of delays occurring on the edges $\{v, v'\}, \{v,
v''\}$; and $\iota', \iota''$ describe whether $v', v''$ are reached from a
source in their respective subtrees at time $t'$ and~$t''$, respectively.
The number of such tuples and the time required to process each of them is clearly polynomial in $t + k + \abs{\TG}$.

\bigskip{}
After having computed all entries $D[v, k, t, \iota]$, the solution of the \TemporalDelaying{} instance $(\TG, k)$ can be found as the value of 
\[ 
	R[\hat{v}, k] := \min_{t} D[\hat{v}, k, t, \false],
\]
where $\hat{v}$ is the root vertex of $\TG$.

It remains to consider the case that the underlying graph of $\TG$ is a disconnected forest.
In this case simply apply the above algorithm to each connected component.
Afterwards, determining the optimal way to split the overall budget between the connected components can be computed by a simple dynamic program.
Define $X[i, k]$ as the minimum weight reached in the first $i$~trees if up to $k$~time-edges are delayed and use the fact that
\[
X[1, k] = \min_{k' \leq k} \left( R[\hat{v}_1, k']\right)
\]
and for all $i >1$ 
\[
X[i, k] = \min_{k' \leq k} \left( R[\hat{v}_i, k'] + X[i-1, k-k']\right),
\]
where $\hat{v}_j$ is the root of the $j$\textsuperscript{th}~tree.
\end{proof}

\section{Conclusion}
While both problem variants, \TemporalDeleting{} and \TemporalDelaying{},
are polynomial-time solvable on forests and \Wone-hard when parameterized by
$k$, even if the lifetime is $\lifetime=2$, their complexities diverge
when we parameterize by the number~$r$ of reachable vertices.
Here, \TemporalDeleting{} is \Wone-hard while for \TemporalDelaying{} we found
a fixed-parameter tractable algorithm.
This makes \TemporalDelaying{} particularly interesting for applications where the number of reachable vertices should be very small,
e.g.\ when trying to contain the spread of dangerous diseases.

On the practical side we want to point out that
our algorithm for \TemporalDelaying{} parameterized by $r$ uses only linear
space, and its search-tree-based approach makes it fit for optimization techniques like further data reduction rules or pruning using lower bounds.
Furthermore, our max-flow-based branching technique can be turned into a
$r$-approximation for for minimizing the number $r$ of
reachable vertices by delaying $k$ time-edges.
To do so, instead of branching into all choices of $v \in Y$ in line~\ref{line:9} of \cref{algo:search-tree},
simply invoke $g(R \cup Y)$.
Refining the presented technique towards better approximation guarantees seems
to be a promising research direction.
Moreover, when focusing on specific applications, it is natural to exploit
application-dependent graph properties towards designing
more efficient algorithms. In particular: which well-motivated temporal graph
classes beyond trees allow e.g.\ polynomial-time solvability of
\TemporalDeleting{} or \TemporalDelaying{}? Finally, from the viewpoint of
parameterized complexity the parameters~$k$ and $r$ are settled, but the
landscape of structural parameters is still waiting to be explored.

\bibliography{strings-long,bibliography}

\end{document}